\def\ps@headings{%
\def\@oddhead{\mbox{}\scriptsize\rightmark \hfil \thepage}%
\def\@evenhead{\scriptsize\thepage \hfil \leftmark\mbox{}}%
\def\@oddfoot{}%
\def\@evenfoot{}}
\newcommand{\C}{{\mathcal C}}
\newcommand{\mc}{{\mathcal{C}}}
\newcommand{\cl}{{\mathcal{C}(\boldsymbol{L})}}
\newcommand{\clw}{{\mathcal{C}(\boldsymbol{L}[W])}}
\newcommand{\E}{{\mathcal E}}
\newcommand{\G}{{\mathcal G}}
\newcommand{\I}{{\mathcal I}}
\newcommand{\J}{{\mathcal J}}
\newcommand{\X}{{\mathcal X}}
\newcommand{\Xah}{{\widehat{\mathcal X}_A}}
\newcommand{\Xa}{{\mathcal{X}_A}}
\newcommand{\Z}{{\mathcal Z}}
\newcommand{\Y}{{\mathcal Y}}
\newcommand{\V}{{\mathcal V}}
\newcommand{\N}{{\mathcal N}}
\newcommand{\FXZ}{{\mathcal {F}(m,n,\X,\Z,f)}}
\newcommand{\vr}{{\sf r}}
\newcommand{\bM}{{\boldsymbol M}} 
\newcommand{\bI}{{\boldsymbol I}} 
\newcommand{\bP}{{\boldsymbol P}} 
\newcommand{\bQ}{{\boldsymbol Q}}
\newcommand{\bL}{{\boldsymbol L}}
\newcommand{\bLo}{{\boldsymbol L}^{(0)}}
\newcommand{\bA}{{\boldsymbol A}}
\newcommand{\bX}{\boldsymbol{X}}
\newcommand{\bY}{\boldsymbol{Y}}
\newcommand{\sa}{{\mathsf{a}}}
\newcommand{\en}{{\sf H}}
\newcommand{\mi}{{\sf I}}
\newcommand{\supp}{{\sf supp}}
\newcommand{\sfG}{{\mathsf{G}}}
\newcommand{\dist}{{\mathsf{d}}}
\newcommand{\weight}{{\mathsf{wt}}}
\newcommand{\spn}{{\mathsf{span}_q}}
\newcommand{\rank}{{\mathsf{rank}_q}}
\newcommand{\mr}{{\text{min-rank}}}
\newcommand{\define}{\stackrel{\mbox{\tiny $\triangle$}}{=}}
\newcommand{\ba}{{\boldsymbol a}}
\newcommand{\bb}{{\boldsymbol b}}
\newcommand{\bu}{{\boldsymbol u}}
\newcommand{\bv}{{\boldsymbol v}}
\newcommand{\bs}{{\boldsymbol s}}
\newcommand{\by}{{\boldsymbol y}}
\newcommand{\byi}{{{\boldsymbol y}^{(i)}}}
\newcommand{\bc}{{\boldsymbol c}}
\newcommand{\bch}{{\hat{{\boldsymbol c}}}}
\newcommand{\bO}{{\boldsymbol 0}}
\newcommand{\bw}{{\boldsymbol{w}}}
\newcommand{\bg}{{\boldsymbol{g}}}
\newcommand{\bx}{{\boldsymbol{x}}}
\newcommand{\bz}{{\boldsymbol{z}}}
\newcommand{\bG}{{\boldsymbol{G}}}
\newcommand{\be}{{\boldsymbol e}}
\newcommand{\bei}{{{\boldsymbol e}^{(i)}}}
\newcommand{\bbt}{\boldsymbol{\beta}}
\newcommand{\bcone}{{\boldsymbol c}^{(1)}}
\newcommand{\bctwo}{{\boldsymbol c}^{(2)}}
\newcommand{\bck}{{\boldsymbol c}^{(k)}}
\newcommand{\bci}{{\boldsymbol c}^{(i)}}
\newcommand{\bvi}{{\boldsymbol v}^{(i)}}
\newcommand{\bui}{{{\boldsymbol u}^{(i)}}}
\newcommand{\NX}{{\N_q(\delta,m,n,\X,f)}}
\newcommand{\IX}{{\I(q, m, n, \X, f)}}
\newcommand{\JX}{{\J(m, n, \X, f)}}
\newcommand{\bxi}{{\boldsymbol{\xi}}}
\newcommand{\bxii}{{\boldsymbol \xi^{(i)}}}
\newcommand{\bbN}{{\mathbb N}}
\newcommand{\al}{\alpha}
\newcommand{\bt}{\beta}
\newcommand{\vep}{\varepsilon}
\newcommand{\kp}{\kappa}
\newcommand{\kpq}{\kappa_q}
\newcommand{\dd}{{$\delta$-error-correcting $(m,n,\X,f)$-IC over $\fq$}}
\newcommand{\mic}{{$(m,n,\X,f)$-IC over $\fq$}}
\newcommand{\rmic}{{$\eta$-randomized $(m,n,\X,f)$-IC over $\fq$}}
\newcommand{\ra}{\rightarrow}
\renewcommand{\ge}{\geq}
\renewcommand{\le}{\leq}
\newcommand{\ff}{\mathbb{F}}
\newcommand{\fq}{\mathbb{F}_q}
\newcommand{\fqn}{\mathbb{F}_q^n}
\newcommand{\et}{{\emph{et al.}}}
\newcommand{\fkD}{{\mathfrak D}}
\newcommand{\fkE}{{\mathfrak E}}
\newtheorem{definition}{Definition}[section]
\newtheorem{example}{Example}[section]
\newtheorem{theorem}{Theorem}[section]
\newtheorem{proposition}[theorem]{Proposition}
\newtheorem{lemma}[theorem]{Lemma}
\newtheorem{corollary}[theorem]{Corollary}
\newtheorem{remark}[theorem]{Remark}
\begin{document}

\title{On the Security of Index Coding with Side Information}

\author{
  \IEEEauthorblockN{Son Hoang Dau$^*$, Vitaly Skachek$^{\dagger,1}$, and Yeow Meng Chee$^\ddagger$ \vspace{1ex}}
  \IEEEauthorblockA{$^{*,\ddagger}$Division of Mathematical Sciences,
    School of Physical and Mathematical Sciences\\
    Nanyang Technological University,
    21 Nanyang Link, Singapore 637371 \vspace{1ex}\\
    $^{\dagger}$Coordinated Science Laboratory, University of Illinois at Urbana-Champaign \\
    1308 W. Main Street, Urbana, IL 61801, USA \vspace{1ex} \\
    Emails: {\it $^{*}$daus0002@ntu.edu.sg, $^{\dagger}$vitalys@illinois.edu, $^{\ddagger}$YMChee@ntu.edu.sg }
  }
}

\maketitle

\footnotetext[1]{The work of this author was done while he was with the Division of Mathematical Sciences, School of Physical and Mathematical Sciences, Nanyang Technological University, 21 Nanyang Link, Singapore 637371.}

\begin{abstract}
\boldmath 
Security aspects of the Index Coding with Side Information (ICSI) problem are investigated. 
Building on the results of Bar-Yossef \emph{et al.} (2006), the properties of linear index codes are further explored. 
The notion of weak security, considered by Bhattad and Narayanan (2005) in the context of network coding, is 
generalized to \emph{block security}. 
It is shown that the linear index code based on a matrix $\bL$, 
whose column space code $\cl$ has length $n$, minimum distance $d$ and dual distance $d^\perp$, is $(d-1-t)$-block secure (and hence also weakly secure) if the adversary knows in advance $t \leq d-2$ messages, and is completely insecure if the adversary knows in advance more than $n - d^\perp$ messages. 
Strong security is examined under the conditions that the adversary: (i) possesses $t$ messages in advance; 
(ii) eavesdrops at most $\mu$ transmissions; (iii) corrupts at most $\delta$ transmissions. 
We prove that for sufficiently large $q$, 
an optimal linear index code which is strongly secure against such an adversary has length $\kp_q + \mu+2\delta$. Here
$\kp_q$ is a generalization of the {\mr} over $\fq$ of the side information graph for the ICSI problem in its original formulation in the work of Bar-Yossef~\emph{et~al.} 
\end{abstract}

\section{Introduction}
\label{sec:introduction}

\PARstart{T}he problem of Index Coding with Side Information (ICSI) was introduced by Birk and Kol~\cite{BirkKol98}, \cite{BirkKol2006}. It was motivated by applications such as audio and video-on-demand, and daily newspaper delivery. In these applications a server (sender) has to deliver some sets of data, audio or video files to a set of clients (receivers), 
different sets are requested by different receivers. Assume that before the transmission starts, 
the receivers have already (from previous transmissions) some files or movies in their possession.
Via a slow backward channel, the receivers can let the sender know which messages they already have in their possession, and which messages they request. By exploiting this information, the amount of the overall transmissions can be reduced. 
As it was observed in~\cite{BirkKol98}, this can be achieved by coding the messages at the server 
before broadcasting them out. 

Another possible application of the ICSI problem is in opportunistic wireless networks. 
These are networks in which a wireless node can opportunistically listen to the wireless channel. As a result, the node may obtain packets that were not designated to it (see~\cite{Rouayheb2009, Katti2006, Katti2008}). This way, a node obtains some side information about the transmitted data. Exploiting this additional knowledge may help to increase the throughput of the system. 

Consider the toy example in Figure~1. It presents a scenario with one sender and four receivers. 
Each receiver requires a different information packet (or message). 
The na\"ive approach requires four separate transmissions, one transmission per an information
packet. However, by exploiting the knowledge of the subsets of messages that clients already have, and by using coding of the transmitted data, the server can satisfy all the demands by broadcasting just one coded packet. 

The ICSI problem has been a subject of several recent studies \cite{Yossef, LubetzkyStav, Wu, Rouayheb2007, Rouayheb2008, Rouayheb2009, ChaudhrySprintson, Alon}. This problem can be regarded as a special case of the well-known network coding (NC) problem~\cite{Ahlswede}, \cite{KoetterMedard2003}. In particular, it was shown that every instance of the NC problem can be reduced to an instance of the ICSI problem~\cite{Rouayheb2008, Rouayheb2009}.
\vspace{-45pt}
\begin{figure}[h]
\begin{flushright}
\scalebox{1} 
{
\begin{pspicture}(0,-4.9003124)(7.7978125,5.0596876)
\definecolor{color1106b}{rgb}{0.8,0.8,0.8}
\pscircle[linewidth=0.04,dimen=outer,fillstyle=solid,fillcolor=color1106b](3.3278124,0.0496875){0.55}
\pscircle[linewidth=0.04,dimen=outer,fillstyle=solid,fillcolor=color1106b](1.3678125,1.9896874){0.51}
\pscircle[linewidth=0.04,dimen=outer,fillstyle=solid,fillcolor=color1106b](1.3278126,-1.9703125){0.51}
\pscircle[linewidth=0.04,dimen=outer,fillstyle=solid,fillcolor=color1106b](5.3478127,2.0096874){0.51}
\pscircle[linewidth=0.04,dimen=outer,fillstyle=solid,fillcolor=color1106b](5.3278127,-1.9303125){0.51}
\usefont{T1}{ptm}{m}{n}
\rput(3.3552186,0.0296875){$S$}
\usefont{T1}{ptm}{m}{n}
\rput(5.315219,-1.9303125){$R_3$}
\usefont{T1}{ptm}{m}{n}
\rput(5.315219,2.0096874){$R_4$}
\usefont{T1}{ptm}{m}{n}
\rput(1.3552188,1.9896874){$R_1$}
\usefont{T1}{ptm}{m}{n}
\rput(1.3152188,-1.9703125){$R_2$}
\pscircle[linewidth=0.022,linestyle=dashed,dash=0.16cm 0.16cm,dimen=outer](3.3578124,0.0196875){1.04}
\pscircle[linewidth=0.018,linestyle=dashed,dash=0.16cm 0.16cm,dimen=outer](3.3478124,0.0496875){1.75}
\pscircle[linewidth=0.02,linestyle=dashed,dash=0.16cm 0.16cm,dimen=outer](3.3378124,0.0396875){2.56}
\usefont{T1}{ptm}{m}{n}
\rput(4.8192186,0.0696875){$\sum_{i = 1}^4 x_i$}
\usefont{T1}{ptm}{m}{n}
\rput(5.777656,3.2296875){has $x_1,x_2,x_3$}
\usefont{T1}{ptm}{m}{n}
\rput(5.6076565,2.7896874){requests $x_4$}
\usefont{T1}{ptm}{m}{n}
\rput(1.4176563,3.2496874){has $x_2,x_3,x_4$}
\usefont{T1}{ptm}{m}{n}
\rput(1.2276562,2.7896874){requests $x_1$}
\usefont{T1}{ptm}{m}{n}
\rput(1.3776562,-2.7503126){has $x_1,x_3,x_4$}
\usefont{T1}{ptm}{m}{n}
\rput(1.2276562,-3.1903124){requests $x_2$}
\usefont{T1}{ptm}{m}{n}
\rput(5.777656,-2.7503126){has $x_1,x_2,x_4$}
\usefont{T1}{ptm}{m}{n}
\rput(5.6076565,-3.2103126){requests $x_3$}
\end{pspicture} 
}
\end{flushright}
\caption{An example of the ICSI problem}
\end{figure}
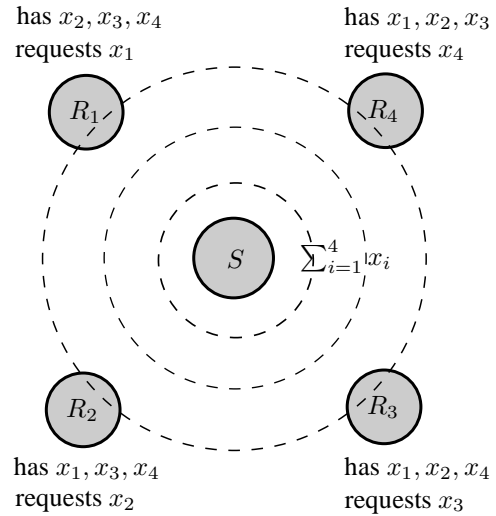

Several previous works focused on the design of an efficient index code for the ICSI problem. Given an instance of the ICSI problem, Bar-Yossef \emph{et al.} \cite{Yossef} proved that finding the best \emph{scalar linear binary index code} is equivalent to finding the so-called \emph{\mr} of a graph, which is known to be an NP-hard problem (see \cite{Yossef, Peeters96}). Here scalar linear index codes refer to linear index codes in which each message is a symbol in the field $\fq$. By contrast, in \emph{vector linear index codes} each message is a vector over $\fq$. Lubetzky and Stav \cite{LubetzkyStav} showed that there exist instances in which scalar linear index codes over nonbinary fields and linear index codes over mixed fields outperform the scalar linear binary index codes. 
El Rouayheb \et \cite{Rouayheb2008, Rouayheb2009} showed that for certain instances of the ICSI problem, vector linear index codes achieve strictly higher transmission rate than scalar linear index codes do. They also pointed out that there exist instances in which vector nonlinear index codes outperform vector linear index codes. Vector nonlinear index codes were also shown to outperform scalar nonlinear index codes 
for certain instances by Alon {\et} \cite{Alon}. 
Several heuristic solutions for the ICSI problem were proposed in \cite{Rouayheb2007, ChaudhrySprintson}. 
          
In this paper, we study the security aspects of linear index codes. We restrict ourselves to \emph{scalar linear} index codes. 
It is known that vector linear index codes can achieve better transmission rate than their scalar counterparts, for certain instances of the ICSI problem~\cite{Rouayheb2008, Rouayheb2009}. However, if the block length is fixed, one can model a 
vector index code as a scalar index code applied to another instance of the ICSI problem. 
If the block length is $\ell$, the number of messages is $n$, and the number of receivers is $m$
in the original (vector) instance, then the equivalent (scalar) instance can be viewed as having $\ell n$ messages
and $\ell m$ receivers. 

Let $\fq$ be a finite field with $q$ elements. 
In its most general formulation, a linear index code maps $\bx \in \fqn$ onto $( \bx | \bg) \, \bL$, 
where $\bg \in \fq^\eta$ is a random vector, $\bL$ is an $(n + \eta) \times N$ matrix over $\fq$, and $n, \eta, N \in \bbN$. 
In this work, we show that each deterministic linear index code (i.e. $\eta = 0$) provides a certain level of 
information security. 
More specifically, let the code $\cl$ be spanned by the columns of $\bL$, and let
$d$ and $d^\perp$ be its minimum distance and dual distance, respectively. 
We say that a particular adversary is of strength $t$ if it has $t$ messages in its possession. 
Then, we show that the index code based on $\bL$ is $(d-1-t)$-block secure against all adversaries 
of strength $t \leq d - 2$ and is completely insecure against any adversary of strength at least 
$n - d^\perp + 1$. If $\cl$ is an MDS code, then the two bounds coincide. The technique used in the 
proof for this result is reminiscent of that used in the constructions of (multiple) secret sharing schemes
from linear error-correcting codes~\cite{Massey1993,Ding1997}. The results on the security of linear index codes
can be further employed to analyze the existence of solutions for a natural generalization of the ICSI problem, 
so-called the \emph{Index Coding with Side and Restricted Information (ICSRI)} problem. 
In that problem, it is required that some receivers have no information about some messages.     

In the sequel, we also consider a non-deterministic linear index code, based on the use of random symbols (i.e. $\eta \ge 1$). We show that the coset coding technique (which has been successfully employed in Secure Network Coding literature, see, for instance \cite{CaiYeung2002, Feldman2004, Rouayheb_Soljanin2007, Zhuang2010, Silva_Kschischang2010}) gives an optimal strongly secure linear index code of length $\kp_q + \mu + 2\delta$. This index code is strongly secure against an adversary which: 
\begin{itemize}
\item[(i)] has $t$ arbitrary messages in advance; 
\item[(ii)] eavesdrops at most $\mu$ transmissions; 
\item[(iii)] corrupts at most $\delta$ transmissions. 
\end{itemize}

Previous works on the security aspects (and on the error-correction aspect, as a special case) 
of network coding dealt with the multicast scenario. One of the main reasons for this limitation is that the optimal simultaneous transmission rates for non-multicast networks have not been fully characterized yet. The ICSI problem
can be modeled as a special case of the non-multicast Network Coding problem (\cite{Alon, Rouayheb2009}). 
Moreover, being modeled in that way, 
it requires that there are directed edges from particular sources to each sink, which provide the side information. 
The symbols transmitted on these special edges are not allowed to be corrupted, where  usually for network coding any edge can be corrupted. These two differences restrict the ability to derive the results on the security 
of the index coding schemes from the existing results on security of network coding schemes.  

The paper is organized as follows. Notations and definitions, which are used in the rest of the paper, are introduced in Section~\ref{sec:preliminaries}. The model and some basic results for the ICSI problem are presented in Section \ref{sec:icsi}. The block security of linear index codes is analyzed in Section~\ref{sec:block_security}. In particular, the Index Coding with Side and Restricted Information problem is presented and analyzed in Section~\ref{sec:ICSRI}. 
Section~\ref{sec:strong_security} is devoted to the analysis of strong security for index coding. 
The paper is concluded in Section~\ref{sec:conclusion}.  

\section{Preliminaries}
\label{sec:preliminaries}
Recall that we use the notation $\fq$ for the finite field with $q$ elements, where $q$ is a power of prime. We also use $\fq^*$ for the set of all nonzero elements of $\fq$. Let $[n]$ denote the set of integers $\{1,2,\ldots,n\}$.
For the vectors $\bu = (u_1, u_2, \ldots, u_n) \in \fq^n$ and $\bv = (v_1, v_2, \ldots, v_n) \in \fq^n$, 
the (Hamming) distance between $\bu$ and $\bv$ is defined to be the number of coordinates where $\bu$ and $\bv$ differ, 
namely, 
\[
\dist(\bu,\bv) = |\{i \in [n] \; : \; u_i \ne v_i\}|. 
\]
The \emph{support} of a vector $\bu \in \fqn$ is defined to be the set $\text{supp}(\bu) = \{i \in [n]: u_i \ne 0\}$.  
The (Hamming) weight of a vector $\bu$, denoted $\weight(\bu)$, is defined to be $|\text{supp}(\bu)|$, the number of nonzero coordinates of $\bu$. 

A $k$-dimensional subspace $\mc$ of $\fq^n$ is called a linear $[n,k,d]_q$ ($q$-ary) code if the minimum distance of $\mc$, 
\[
\dist(\mc) \define \min_{\bu \in \mc, \; \bv \in \mc, \; \bu \neq \bv} \dist(\bu,\bv) \; ,
\]
is equal to $d$. Sometimes we may use the notation $[n,k]_q$ for the sake of simplicity. The vectors in $\mc$ are called codewords. It is easy to see that the minimum weight of a nonzero codeword in a linear code $\mc$ is equal to its minimum distance $\dist(\mc)$. A \emph{generator matrix} $\sfG$ of an $[n,k]_q$-code $\mc$ is a $k \times n$ matrix whose rows are linearly independent codewords of $\mc$. Then $\mc = \{\by \sfG : \by \in \fq^k\}$. 

The \emph{dual code} or \emph{dual space} of $\mc$ is defined as $\mc^\bot = \{\bu \in \fq^n: \; \bu \bc^T = 0 \text{ for all } \bc \in \mc\}$. The minimum distance of $\mc^\bot$, $\dist(\mc^\bot)$, is called the dual distance of $\mc$.     

The following upper bound on the minimum distance of a $q$-ary linear code is well-known (see~\cite{MW_and_S} Chapter 1).

\medskip
\begin{theorem}[Singleton bound]
\label{singleton}
For an $[n,k,d]_q$-code, we have $d \leq n - k + 1$. 
\end{theorem}
\medskip 

Codes attaining this bound are called \emph{maximum distan\-ce se\-pa\-rable} (MDS) codes. For a subset of vectors 
$$
\{ \bcone, \bctwo,\ldots,\bck \} \subseteq \fq^n \; , 
$$ define its linear span over $\fq$:  
\begin{multline*}
\spn\left( \{ \bcone,\bctwo, \ldots, \bck \} \right) \define \\ \left\{ \sum_{i = 1}^k \al_i \bci \; : 
\; \al_i \in \fq, \; i \in [k] \right\} \; .
\end{multline*}
We use $\be_i = (\underbrace{0,\ldots,0}_{i-1},1,\underbrace{0,\ldots,0}_{n-i}) \in \fqn$ to denote the unit vector, which has a one at the $i$th position, and zeros elsewhere. We also use $\mathbf{I}_n$, 
$n \in \bbN$, to denote the $n \times n$ identity matrix.   

We recall the following well-known result in coding theory.

\medskip
\begin{theorem}[\cite{Hedayat}, p. 66]
\label{thmOA}
Let $\mc$ be an $[n,k,d]_q$-code with dual distance $d^\perp$ and $\bM$ denote the $q^k \times n$ matrix whose $q^k$ rows are codewords of $\mc$. If $r \leq d^\perp-1$ then each $r$-tuple from $\fq$ appears in an arbitrary set of $r$ columns of $\bM$ exactly $q^{k-r}$ times. 
\end{theorem}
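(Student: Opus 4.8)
The plan is to reinterpret the number of occurrences of an $r$-tuple as the cardinality of a fibre of a linear map, and thereby reduce the theorem to a single statement about linear independence of columns of a generator matrix. First I would fix a set $S \subseteq [n]$ of $r$ columns, with $r \le d^\perp - 1$, and fix a generator matrix $\sfG$ of $\mc$, so that $\mc = \{\by\sfG : \by \in \fq^k\}$ and the rows of $\bM$ are precisely the vectors $\by\sfG$ as $\by$ ranges over $\fq^k$. Writing $\sfG_S$ for the $k \times r$ submatrix of $\sfG$ consisting of the columns indexed by $S$, the number of rows of $\bM$ whose $S$-coordinates equal a prescribed $\ba \in \fq^r$ is exactly $|\{\by \in \fq^k : \by\sfG_S = \ba\}|$, the size of a fibre of the $\fq$-linear map $\by \mapsto \by\sfG_S$. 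Every non-empty fibre of this map has size $q^{k - \rank(\sfG_S)}$, so the theorem amounts to the claim that this map is surjective, equivalently $\rank(\sfG_S) = r$.

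The hard part — really the only content — is showing that any $r \le d^\perp - 1$ columns of $\sfG$ are linearly independent. For this I would invoke the standard characterization of minimum distance via a parity-check matrix: the minimum distance of a code equals the least number of linearly dependent columns in any of its parity-check matrices. Since $\sfG$ is a parity-check matrix of $\mc^\perp$ and $\dist(\mc^\perp) = d^\perp$, every set of at most $d^\perp - 1$ columns of $\sfG$ is linearly independent. In particular $\sfG_S$ has $r$ independent columns; since $r \le d^\perp - 1 \le k$ (by the Singleton bound of Theorem~\ref{singleton} applied to $\mc^\perp$), this forces $\rank(\sfG_S) = r$. Hence $\by \mapsto \by\sfG_S$ is onto $\fq^r$, every $\ba \in \fq^r$ lies in a non-empty fibre, and that fibre has size $q^{k-r}$, which is exactly the assertion.

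An alternative, fully self-contained route would avoid quoting the parity-check characterization and instead argue by an additive-character computation. Fix a nontrivial additive character $\chi$ of $\fq$, and for $\bv \in \fq^r$ let $\bar\bv \in \fqn$ denote its zero-extension supported on $S$. Then the number of codewords $\bc \in \mc$ with $\bc_S = \ba$ equals $\tfrac{1}{q^r}\sum_{\bc \in \mc}\sum_{\bv \in \fq^r}\chi\!\big(\bv\cdot(\bc_S - \ba)\big)$; the term $\bv = \bO$ contributes $q^{k}/q^{r} = q^{k-r}$, while for $\bv \neq \bO$ the inner sum $\sum_{\bc \in \mc}\chi(\bar\bv \cdot \bc)$ vanishes because $\bar\bv$ is a nonzero vector of weight at most $r < d^\perp$ and hence cannot belong to $\mc^\perp$. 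This leaves $q^{k-r}$ independently of $\ba$. Either way, the crux is the same elementary fact — no nonzero vector of weight at most $d^\perp - 1$ lies in $\mc^\perp$ — and everything else is bookkeeping.
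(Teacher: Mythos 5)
Your proof is correct. Note, though, that the paper does not actually prove this statement: it cites Theorem~\ref{thmOA} verbatim from Hedayat, Sloane, and Stufken's book on orthogonal arrays (\cite{Hedayat}, p.~66) and uses it as a black box in the proof of Theorem~\ref{mainthm2}, so there is no ``paper's own proof'' to compare against.

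That said, both of your arguments are sound and standard. The first route --- fixing a generator matrix $\sfG$, observing that it serves as a parity-check matrix for $\mc^\perp$ so that any $r \le d^\perp - 1$ of its columns are linearly independent, and then counting the (equal-sized, nonempty) fibres of the surjective linear map $\by \mapsto \by\sfG_S$ --- is essentially the textbook proof that a linear $[n,k]$-code with dual distance $d^\perp$ yields an orthogonal array of strength $d^\perp - 1$ and index $q^{k-d^\perp+1}$. You correctly close the one small loophole ($r \le k$) via the Singleton bound applied to $\mc^\perp$. The character-sum variant is equally valid and nicely isolates the single nontrivial input (no nonzero vector of weight less than $d^\perp$ lies in $\mc^\perp$); it generalizes more readily to nonlinear settings via MacWilliams-type identities, which is a mild advantage, though for the linear case here the rank argument is the more elementary of the two. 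Either would serve as a complete proof of the cited fact.
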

\medskip

For a random vector $\bY = (Y_1,Y_2,\ldots,Y_n)$ and a subset $B = \{i_1,i_2,\ldots,i_b\}$ of $[n]$, where $i_1 < i_2 < \cdots <i_b$, let $\bY_B$ denote the vector $(Y_{i_1},Y_{i_2},\ldots,Y_{i_b})$. For an $n \times k$ matrix $\bM$, let $\bM_i$ denote the $i$th row of $\bM$, and $\bM[j]$ its $j$th column. For a set $E \subseteq [n]$, let $\bM_E$ denote the $|E| \times k$ submatrix of $\bM$ formed by rows of $\bM$ which are indexed by the elements of $E$. 
For a set $F \subseteq [k]$, let $\bM[F]$ denote the $n \times |F|$ submatrix of $\bM$ 
formed by columns of $\bM$ which are indexed by the elements of $F$. 

Let $X$ and $Y$ be discrete random variables taking values in the sets $\Sigma_X$ and $\Sigma_Y$, respectively. Let $\text{Pr}(X=x)$ denote the probability that $X$ takes a particular value $x \in \Sigma_X$. Let $\en(X)$, $\en(X|Y)$, $\mi(X;Y)$, and $\mi(X;Y|Z)$ denote the (binary) entropy, conditional entropy, mutual information, and conditional mutual information (see \cite{CoverThomas1991} for the background).  

\section{Index Coding and Some Basic Results}
\label{sec:icsi}

The Index Coding with Side Information (ICSI) problem considers the following communications scenario. 
There is a unique sender (or source) $S$, who has a vector of messages $\bx = (x_1, x_2, \ldots, x_n) \in \fqn$ in his possession, which is a realized value of a random vector $\bX = (X_1,X_2,\ldots,X_n)$. $X_1,X_2,\ldots,X_n$ hereafter are assumed to be independent uniformly distributed 
random variables over $\fq$. 
There are also $m$ receivers $R_1,R_2,\ldots,R_m$. For each $i \in [m]$, $R_i$ has some side information, 
i.e. $R_i$ owns a subset of messages $\{ x_j \}_{j \in \X_i}$, $\X_i \subseteq [n]$. In addition, each $R_i$, $i \in [m]$, 
is interested in receiving the message $x_{f(i)}$, for some \emph{demand function} $f: [m] \rightarrow [n]$. 
Here we assume that $f(i) \notin \X_i$ for all $i \in [m]$. Let $\X=(\X_1,\X_2,\ldots,\X_m)$. 
An instance of the ICSI problem is given by a quadruple $(m,n,\X,f)$. 
Here we assume that every receiver requests exactly one message. This assumption is not a limitation of the model, as we can consider an equivalent problem by splitting each receiver who requests multiple messages into multiple receivers,
each of whom requests exactly one message and have the same set of side information (see \cite{BirkKol98,Yossef}). 

\vskip 10pt 
\begin{definition} 
An \emph{index code} over $\fq$ for an instance $(m, n,\X,f)$ of the ICSI problem, 
referred to as an {\mic}, is an encoding function
\begin{eqnarray*}
\fkE & : & \fq^n \rightarrow \fq^N \; , 
\end{eqnarray*}
such that for each receiver $R_i$, $i \in [m]$, there exists a decoding function
\[
\fkD_i \: : \: \fq^N \times \fq^{|\X_i|} \rightarrow \fq \; , \\
\]
satisfying
\[
\forall \bx \in \fq^n \; : \; \fkD_i(\fkE(\bx), \bx_{\X_i}) = x_{f(i)} \; .
\]
The parameter $N$ is called the \emph{length} of the index code. 
In the scheme corresponding to this code, 
$S$ broadcasts a vector $\fkE(\bx)$ of length $N$ over $\fq$.   
\label{def:IC}
\end{definition}

\vskip 10pt
\begin{definition}
An index code of the shortest possible length is called \emph{optimal}.
\end{definition} 

\medskip
\begin{definition} 
A \emph{linear index code} is an index code, for which the encoding function 
$\fkE$ is a linear transformation over $\fq$. 
Such a code can be described as 
\[
\forall \bx \in \fq^n \; : \; \fkE (\bx) = \bx \bL \; , 
\]
where $\bL$ is an $n \times N$ matrix over $\fq$. The matrix $\bL$ is called the \emph{matrix corresponding 
to the index code $\fkE$}. We also refer to $\fkE$ as the \emph{index code based on $\bL$}.  
Notice that the length of $\fkE$ is the number of columns of $\bL$. 
\end{definition}
\vskip 10pt 

Let $E \subseteq [n]$ and $\bu \in \fqn$. 
In the sequel, we write $\bu \lhd E$ if $\supp(\bu) \subseteq E$. 
Intuitively, this means that if some receiver knows $x_j$ for all $j \in E$ (and also knows $\bu$), then 
this receiver is also able to compute the value of $\bx \bu^T$.

Hereafter, we assume that the sets $\X_i$, for all $i \in [m]$, are known to $S$.
Moreover, we also assume that the index code $\fkE$ is known to each receiver $R_i$, $i \in [m]$. In practice 
this can be achieved by a preliminary communication session, when the knowledge of the sets 
$\X_i$, for all $i \in [m]$, and of the code $\fkE$ are disseminated between the participants of 
the scheme. 

Let $\C(\bL) = \spn(\{ \bL[j]^T \}_{j \in [N]})$, the subspace spanned by the (transposed) columns of $\bL$. 
The following lemma was implicitly formulated in \cite{Yossef} for the case where $m = n$, $f(i) =i$ for 
all $i \in [m]$, and $q = 2$. 
This lemma specifies a sufficient condition on $\C(\bL)$ so that a receiver can reconstruct a particular message. 
We reproduce this lemma with its proof in its general form for the sake of completeness of the presentation.  
 
\vskip 10pt 
\begin{lemma}
\label{lem1}
Let $\bL$ be an $n \times N$ matrix over $\fq$. Assume that $S$ broadcasts $\bx \bL$. 
Then, for each $i \in [m]$, the receiver $R_i$ can reconstruct $x_{f(i)}$ if 
there exists a vector $\bui \in \fqn$ satisfying 
\begin{enumerate}
\item $\bui \lhd \X_i$; 
\item $\bui + \be_{f(i)} \in \cl$. 
\end{enumerate}
\end{lemma}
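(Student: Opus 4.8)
The plan is to show that if such a vector $\bui$ exists, then $R_i$ can explicitly compute $x_{f(i)}$ from the broadcast $\bx\bL$ together with its side information $\bx_{\X_i}$. Since $\bui + \be_{f(i)} \in \cl = \spn(\{\bL[j]^T\}_{j\in[N]})$, by definition of the span there exist coefficients $\lambda_1,\dots,\lambda_N \in \fq$ such that $\bui + \be_{f(i)} = \sum_{j\in[N]} \lambda_j \bL[j]^T$, i.e. $(\bui + \be_{f(i)})^T = \bL \bbt^T$ where $\bbt = (\lambda_1,\dots,\lambda_N)$. First I would right-multiply the broadcast vector $\bx\bL$ by $\bbt^T$ to obtain the scalar $(\bx\bL)\bbt^T = \bx(\bL\bbt^T) = \bx(\bui + \be_{f(i)})^T = \bx\,\bui^T + x_{f(i)}$.

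The next step is to observe that the quantity $\bx\,\bui^T$ is computable by $R_i$ from its side information alone. Indeed, condition (1) says $\bui \lhd \X_i$, meaning $\supp(\bui) \subseteq \X_i$, so $\bx\,\bui^T = \sum_{j \in \X_i} x_j (\bui)_j$, and every $x_j$ appearing here is known to $R_i$ since it owns $\{x_j\}_{j\in\X_i}$. Therefore $R_i$ forms the decoding function
\[
\fkD_i(\bx\bL, \bx_{\X_i}) = (\bx\bL)\bbt^T - \bx\,\bui^T,
\]
which by the computation above equals exactly $x_{f(i)}$. This holds for every $\bx \in \fqn$, so the required decoding property of Definition~\ref{def:IC} is satisfied.

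Finally, I would note that $R_i$ has all the data needed to carry this out: the coefficient vector $\bbt$ (equivalently $\bui$) depends only on $\bL$ and $\X_i$ and $f(i)$, all of which are assumed known to the receiver after the preliminary session, so $R_i$ can precompute $\bbt$ and $\bui$ offline; at transmission time it only needs the received word $\bx\bL$ and its own messages $\bx_{\X_i}$. This completes the argument.

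I do not anticipate any real obstacle here: the statement is essentially a reformulation of the span condition, and the only thing to be careful about is bookkeeping — making sure the transpose conventions line up (columns of $\bL$ versus their transposes as elements of $\cl$) and that the support condition is invoked precisely where $R_i$'s knowledge of the relevant $x_j$'s is needed. If one wanted, the two conditions could be merged by noting that $\bui \lhd \X_i$ together with $\bui + \be_{f(i)} \in \cl$ is equivalent to saying the vector $\be_{f(i)}$ lies in $\cl + \spn(\{\be_j\}_{j\in\X_i})$, but the split form given is the most transparent for exhibiting the decoder.
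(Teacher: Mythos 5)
Your proof is correct and is essentially identical to the paper's: both express $\bui + \be_{f(i)}$ as a linear combination of columns of $\bL$ via a coefficient vector $\bbt$, multiply the broadcast $\bx\bL$ by $\bbt^T$, and subtract $\bx\,\bui^T$ (computable from side information since $\supp(\bui)\subseteq\X_i$) to recover $x_{f(i)}$. The only difference is cosmetic notation for how the span membership is unpacked.
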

\vskip 10pt

\begin{proof}
Assume that $\bui \lhd \X_i$ and $\bui + \be_{f(i)} \in \cl$. 
Since $\bui + \be_{f(i)} \in \cl$, 
there exists $\boldsymbol{\bt} \in \fq^N$ such that 
\[
\bui + \be_{f(i)} = \bbt \bL^T. 
\]
By taking the transpose and pre-multiplying by $\bx$, we obtain 
\[
\bx(\bui + \be_{f(i)})^T = (\bx \bL) \bbt^T.
\]
Therefore, 
\[
x_{f(i)} = \bx \be_{f(i)}^T = (\bx \bL) \bbt^T - \bx \bui^T.
\]
Observe that $R_i$ is able to find $\bui$ and $\bbt$ from the knowledge of $\bL$.
Moreover, $R_i$ is also able to compute $\bx \bui^T$ since $\bui \lhd \X_i$. Additionally, $R_i$ knows 
$\bx \bL$, which is transmitted by $S$.
Therefore, $R_i$ is able to compute $x_{f(i)}$. 
\end{proof} 
\medskip

\begin{remark}
It follows from Lemma~\ref{lem1} that $\bL$ corresponds to a linear {\mic} if  $\cl \supseteq \spn( \{\bui + \be_{f(i)} \}_{i \in [m]})$, for some $\bui \lhd \X_i$, $i \in [m]$. We show later in Corollary \ref{coro:index_code} that this condition is also necessary. Finding such an $\bL$ with minimal number of columns by careful selection of $\bui$'s is a difficult task (in fact it is NP-hard to do so, see \cite{Yossef, Peeters96}), which, however, yields a linear coding scheme with the minimal number of transmissions. 
\end{remark}

\section{Block Secure Linear Index Codes}
\label{sec:block_security}
\subsection{Block Security and Weak Security}

In this section, we assume the presence of an adversary $A$ who can listen to all transmissions. 
Assume that $S$ employs a linear index code based on $\bL$. 
The adversary is assumed to possess side information $\{x_j \}_{j \in \X_A}$, 
where $\X_A \subsetneq [n]$. 
For short, we say that $A$ knows (or possesses, owns) $\bx_{\X_A}$. 
The \emph{strength} of $A$ is defined to be $|\X_A|$. 
Denote $\widehat{\X}_A \define \left( [n] \backslash \X_A \right)$. 
Note that by listening to $S$, the adversary also knows $\bs \define \fkE(\bx) = \bx \bL$. We define below several levels 
of security for linear index codes. 

\vskip 10pt 
\begin{definition}
\label{defSecurity}
Suppose that the sender $S$ possesses a vector of messages $\bx \in \fq^n$, which is a realized value of a random vector 
$\bX = (X_1, X_2, \ldots, X_n)$, whose coordinates $X_i$, $i \in [n]$, are all independent and uniformly distributed over $\fq$. An adversary $A$ possesses $\bx_{\X_A}$. Consider a linear {\mic} based on $\bL$. 
\begin{enumerate}
\item 
For $B \subseteq \widehat{\X}_A$, the adversary is said to \emph{have no information about $\bx_B$} if 
\begin{equation}
\label{defnoinfo}
\en(\bX_B| \bX \bL,\bX_{\X_A}) = \en(\bX_B). 
\end{equation}
In other words, despite the partial knowledge on $\bx$ that the adversary has (his side information and the transmissions he eavesdrops), the symbols $\bx_B$ still looks completely random to him.  
\item The index code is said to be \emph{$b$-block secure against $\X_A$} if for every $b$-subset $B \subseteq \widehat{\X}_A$, the adversary has no information about $\bx_B$.  
\item 
The index code is said to be \emph{$b$-block secure against all adversaries of strength $t$} ($0 \le t \le n-1$) if it is $b$-block secure against $\X_A$ for every $\X_A \subset [n]$, $|\X_A| = t$. 
\item  
The index code is said to be \emph{weakly secure against $\X_A$} if it is $1$-block secure against $\X_A$. In other words, after listening to all transmissions, the adversary has no information about each particular message that he does not possess in the first place. 
\item
The index code is said to be \emph{weakly secure against all adversaries of strength $t$ } ($0 \le t \le n-1$) if it is weakly secure against $\X_A$ for every $t$-subset $\X_A$ of $[n]$. 
\item
The index code is said to be \emph{completely insecure against $\X_A$} if an adversary, who possesses $\{x_i \}_{i \in \X_A}$, by listening to all transmissions, is able to determine $x_i$ for all $i \in \widehat{\X}_A$.
\item 
The index code is said to be \emph{completely insecure against any adversary of strength $t$} ($0 \le t \le n-1$) if an adversary, who possesses an arbitrary set of $t$ messages, is always able to reconstruct all of the other $n-t$ messages after listening to all transmissions.
\end{enumerate}
\end{definition}
\medskip 

\begin{remark}
Even when the index code is $b$-block secure ($b \geq 1$) as defined above, the adversary is still able to obtain information about dependencies between various $x_i$'s in $\widehat{\X}_A$ (but he gains no information about any group of $b$ particular messages). This definition of $b$-block security is a generalization of that of weak security (see~\cite{Bhattad},~\cite{Silva}). Obviously, if an index code is $b$-block secure against $\X_A$ ($b\geq 1$) then it is also weakly secure against $\X_A$, but the converse is not always true. 
\end{remark}

\subsection{Necessary and Sufficient Conditions for Block Security} 

In the sequel, we consider the sets $B \subseteq [n]$, $B \neq \varnothing$, and $E \subseteq [n]$, $E \neq \varnothing$. Moreover, we assume that the sets $\X_A$, $B$, and $E$ are disjoint, and that they form a partition of $[n]$, namely $\X_A \cup B \cup E = [n]$. In particular, $\widehat{\X}_A = B \cup E$. 

\vskip 10pt 
\begin{lemma}
\label{lem2}
Assume that for all $\bu \lhd \X_A$ and for all $\al_i \in \fq$, $i \in B$ (not all $\al_i$'s are zeros), 
\begin{equation}
\label{E1}
\bu + \sum_{i \in B}\al_i\be_{i} \notin \cl. 
\end{equation}
Then,
\begin{enumerate} 
\item for all $i \in B$:
\begin{equation}
\bL_i \in \spn(\{\bL_j\}_{j \in E}); 
\label{eq:span}
\end{equation}
\item 
the system
\begin{equation}
\label{E2}
\by \bL_E = \bw \bL_B  
\end{equation}
has at least one solution $\by \in \fq^{|E|}$ for every choice of $\bw \in \fq^{|B|}$. 
\end{enumerate}
\end{lemma}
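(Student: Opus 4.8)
The plan is to show that hypothesis \eqref{E1} is equivalent to the single row-space containment $\spn(\{\bL_i\}_{i\in B})\subseteq\spn(\{\bL_j\}_{j\in E})$, and then to observe that each of the two conclusions is merely a restatement of this containment. Write $V=\spn(\{\bL_j\}_{j\in E})$ and $W=\spn(\{\bL_i\}_{i\in B})$, both viewed as subspaces of $\fq^N$.

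First I would reformulate \eqref{E1}. Since $\X_A$, $B$, and $E$ partition $[n]$, a vector $\bv\in\fq^n$ has the form $\bu+\sum_{i\in B}\al_i\be_i$ with $\bu\lhd\X_A$ exactly when $v_j=0$ for all $j\in E$, and in that case the tuple $(\al_i)_{i\in B}$ is nothing but $(v_i)_{i\in B}$. Hence \eqref{E1} asserts precisely: every $\bv\in\cl$ with $v_j=0$ for all $j\in E$ also satisfies $v_i=0$ for all $i\in B$. Now write $\bv=\bbt\bL^T$ with $\bbt\in\fq^N$; the $k$-th coordinate of $\bv$ equals $\langle\bbt,\bL_k\rangle$ for the standard inner product on $\fq^N$, so the condition $v_j=0$ for all $j\in E$ is the condition $\bbt\in V^\perp$, and the condition $v_i=0$ for all $i\in B$ is the condition $\bbt\in W^\perp$. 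Thus \eqref{E1} is exactly the inclusion $V^\perp\subseteq W^\perp$; since the standard bilinear form on $\fq^N$ is non-degenerate we have $(U^\perp)^\perp=U$ for every subspace $U$, and applying $\perp$ to both sides gives $W\subseteq V$. One may also avoid orthogonal complements entirely by contraposition: if $\bL_{i_0}\notin V$ for some $i_0\in B$, pick $\bbt\in\fq^N$ with $\langle\bbt,\bL_j\rangle=0$ for all $j\in E$ but $\langle\bbt,\bL_{i_0}\rangle\ne 0$, and note that $\bv=\bbt\bL^T\in\cl$ has zero $E$-coordinates and nonzero $i_0$-th coordinate, contradicting \eqref{E1}.

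Once $W\subseteq V$ is established, part~1 is immediate: it says exactly that $\bL_i\in\spn(\{\bL_j\}_{j\in E})$ for every $i\in B$, which is \eqref{eq:span}. Part~2 is equally immediate: given any $\bw\in\fq^{|B|}$, the vector $\bw\bL_B=\sum_{i\in B}w_i\bL_i$ lies in $W\subseteq V=\{\by\bL_E:\by\in\fq^{|E|}\}$, so there is a $\by\in\fq^{|E|}$ with $\by\bL_E=\bw\bL_B$, which is \eqref{E2}. The only substantive point is the reformulation of \eqref{E1}; within it, the single step that needs a touch of care is the use of non-degeneracy of the standard form on $\fq^N$ (or, on the contrapositive route, the existence of a vector separating $\bL_{i_0}$ from $V$), and everything else is bookkeeping.
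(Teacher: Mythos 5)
Your proof is correct and uses the same essential tools as the paper's (orthogonal complements in $\fq^N$, non-degeneracy giving $(U^\perp)^\perp=U$, and the contrapositive construction of a forbidden codeword when some $\bL_{i_0}\notin\spn(\{\bL_j\}_{j\in E})$). The paper proceeds via a case split on $\rank(\bL_E)$ and a subcase split on whether all vectors in $(\spn(\{\bL_j\}_{j\in E}))^\perp$ annihilate every $\bL_i$ for $i\in B$; your reformulation of \eqref{E1} as ``every codeword of $\cl$ vanishing on $E$ vanishes on $B$,'' hence $V^\perp\subseteq W^\perp$, hence $W\subseteq V$, compresses that into a single tidy step but does not change the underlying argument.
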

\medskip
\begin{proof}
\begin{enumerate}
\item
If $\rank(\bL_E) = N$ then the first claim follows immediately. Otherwise, assume that $\rank(\bL_E) < N$. 
As the $N$ columns of $\bL_E$ are linearly dependent, there exists $\by \in \fq^N\backslash \{\bO\}$ such that $\by  \bL_E^T = \bO$. 
\begin{itemize}
\item
If for all such $\by$ and for all $i \in B$ we have $\by \bL_i^T = 0$, then $\bL_i \in ((\spn(\{\bL_j\}_{j \in E}))^\perp)^\perp = \spn(\{\bL_j\}_{j \in E})$ for all $i \in B$. 
\item
Otherwise, there exist $\by \in \fq^N$ and $i \in B$ such that $\by \bL_E^T = \bO$ and $\by \bL_i^T \neq 0$. Without loss of generality, assume that 
\[
\bL=\left[
\begin{array}{c}
\bL_{\X_A} \\ \hline
\bL_B \\ \hline
\bL_E
\end{array}\right].
\]
Let $\bc = \by \bL^T \in \cl$. Then
\[
\bc = (\bc_{\X_A}|\bc_B|\bc_E) = \big(\by \bL_{\X_A}^T \big| \by \bL_B^T \big| \by \bL_E^T \big). 
\]
Hence $\bc_B = \by \bL_B^T \neq \bO$ and $\bc_E = \by \bL_E^T = \bO$. Let $\bu = (\bc_{\X_A}|\bO|\bO) \lhd \X_A$ and $\al_i = c_i$ for all $i \in B$. Then $\al_i$'s are not all zero and $\bu + \sum_{i \in B}\al_i \be_i = \bc \in \cl$, which contradicts (\ref{E1}). 
\end{itemize}
\item
By~(\ref{eq:span}), each row of $\bL_B$ is a linear combination of rows of $\bL_E$. Hence $\bw \bL_B$ is also a linear combination of rows of $\bL_E$. Therefore, (\ref{E2}) has at least one solution.
\end{enumerate}
\end{proof}
\medskip

While Lemma~\ref{lem1} does not discuss security, it provides sufficient conditions 
for successful reconstruction of the information by a legitimate receiver $R_i$. 
Obviously, the legitimate receiver $R_i$ can be replaced by an adversary $A$ in the formulation 
of that lemma. Thus, the conditions in Lemma~\ref{lem1} can be viewed as sufficient conditions 
for absence of weak security with respect to this $A$. 

In Lemma~\ref{lem3}, which appears below, we show that the same conditions are also necessary (for the absence 
of information security with respect to the adversary $A$). However, similarly, $A$ can be replaced by 
a legitimate receiver $R_i$ in the formulation of Lemma~\ref{lem3}. Thus, this lemma also provides both necessary 
and sufficient conditions for successful reconstruction of the information by the legitimate receiver $R_i$. 

Additionally, in Lemma~\ref{lem3}, the weak security is further generalized to \emph{block security}. 

\vskip 10pt 
\begin{lemma}
\label{lem3} 
Let $\bL$ be an $n \times N$ matrix over $\fq$. Assume that $S$ broadcasts $\bx \bL$.  
For a subset $B \subseteq \widehat{\X}_A$, an adversary $A$ who owns $\bx_{\X_A}$, 
after listening to all transmissions, has no information about $\bx_B$ if and only if
\begin{equation}
\label{E3}
\begin{split}
\forall \bu \lhd \X_A,\ &\forall \al_i \in \fq \text{ with } \al_i,i\in B, \text{ not all zero:}\\ &\bu + \sum_{i \in B}\al_i\be_{i} \notin \cl. 
\end{split}
\end{equation}
In particular, for each $i \in \widehat{\X}_A$, $A$ has no information about $x_i$ if and only if 
\[
\forall \bu \lhd \X_A \; : \bu + \be_i \notin \cl. 
\]
\end{lemma}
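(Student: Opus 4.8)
The plan is to prove the equivalence by relating the conditional entropy $\en(\bX_B \mid \bX\bL, \bX_{\X_A})$ to a counting of solutions to a linear system, and then to reduce that count to the combinatorial condition~(\ref{E3}). Throughout I will use the partition $[n] = \X_A \cup B \cup E$ established just before the lemma, and I will condition on the value $\bx_{\X_A}$ that $A$ possesses; since the coordinates of $\bX$ are independent and uniform, it suffices to work with $\bX_{\widehat{\X}_A} = (\bX_B, \bX_E)$ uniform over $\fq^{|B|+|E|}$ and to analyze, for a fixed eavesdropped value $\bs = \bx\bL$, how much the pair $(\bs,\bx_{\X_A})$ constrains $\bX_B$. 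Concretely, $A$ has no information about $\bx_B$ iff for every fixed $\bx_{\X_A}$ the conditional distribution of $\bX_B$ given $\bs$ is uniform on $\fq^{|B|}$, which in turn holds iff every value $\bw \in \fq^{|B|}$ of $\bx_B$ is consistent with the same number of completions $\bx_E$, i.e. iff the fibre sizes of the map $\bx_E \mapsto \bx_{\X_A}\bL_{\X_A} + \bx_B\bL_B + \bx_E\bL_E$ do not depend on $\bw$.

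First I would set up the direction ``(\ref{E3}) $\Rightarrow$ no information''. Assume~(\ref{E3}). By Lemma~\ref{lem2} (whose hypothesis is exactly~(\ref{E3}) with $\X_A$ in the role there), the system $\by\bL_E = \bw\bL_B$ has a solution $\by$ for every $\bw \in \fq^{|B|}$; moreover the solution set is a coset of $\{\by : \by\bL_E^T = \bO\}$ wait --- more precisely the set of $\bx_E$ with $\bx_E\bL_E = \bw\bL_B - \bx_{\X_A}\bL_{\X_A}$ is either empty or a coset of $\ker(\bx_E \mapsto \bx_E\bL_E^T)$, and Lemma~\ref{lem2}(2) together with a translation argument shows it is nonempty for every $\bw$. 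Hence every candidate value $\bw$ of $\bx_B$ has exactly $q^{|E|}/|\{\text{row space contribution}\}|$ --- a number independent of $\bw$ --- many completions $\bx_E$ producing the observed $\bs$. Therefore the posterior on $\bX_B$ is uniform, giving $\en(\bX_B \mid \bX\bL,\bX_{\X_A}) = \en(\bX_B) = |B|\log_2 q$.

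For the converse ``no information $\Rightarrow$ (\ref{E3})'', I argue by contraposition: suppose~(\ref{E3}) fails, so there exist $\bu \lhd \X_A$ and scalars $\al_i$, $i\in B$, not all zero, with $\bc \define \bu + \sum_{i\in B}\al_i\be_i \in \cl$. Write $\bc = \bbt\bL^T$ for some $\bbt \in \fq^N$. Then for the realized messages we have $\bx\bc^T = (\bx\bL)\bbt^T = \bs\bbt^T$, i.e. $\sum_{i\in B}\al_i x_i = \bs\bbt^T - \bx\bu^T$, and the right-hand side is a quantity the adversary can compute from $\bs$ and $\bx_{\X_A}$ alone (since $\supp(\bu)\subseteq\X_A$). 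Thus $A$ learns the value of the nontrivial linear combination $\sum_{i\in B}\al_i X_i$ of the coordinates of $\bX_B$, which strictly reduces its entropy: $\en(\bX_B \mid \bX\bL, \bX_{\X_A}) \le (|B|-1)\log_2 q < \en(\bX_B)$. Hence $A$ does have information about $\bx_B$, completing the contrapositive. The special case $B = \{i\}$ is immediate: the only nontrivial combinations are $\al\be_i$ with $\al\in\fq^*$, and $\bu + \al\be_i \in \cl$ iff $\bu' + \be_i\in\cl$ for $\bu' = \al^{-1}\bu \lhd \X_A$, so the condition collapses to $\bu+\be_i\notin\cl$ for all $\bu\lhd\X_A$.

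The main obstacle I anticipate is the forward direction's entropy bookkeeping: one must convert ``every fibre has the same size'' into the exact statement $\en(\bX_B \mid \bX\bL, \bX_{\X_A}) = \en(\bX_B)$, being careful that $\bX\bL$ is itself random and that the argument is uniform over all values of $\bs$ that occur with positive probability. The cleanest route is probably to fix $\bx_{\X_A}$, show that conditioned on it the joint distribution of $(\bX_B, \bX\bL)$ is such that $\bX_B$ is independent of $\bX\bL$ (each value of $\bX\bL$ being attained by the same number of $(\bx_B,\bx_E)$ pairs for each $\bx_B$), and then average over $\bx_{\X_A}$ using independence of the coordinates; Lemma~\ref{lem2} is exactly the linear-algebraic input that makes the fibre-size count work, so the proof should invoke it rather than reprove the rank identities. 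A secondary subtlety is making sure Lemma~\ref{lem2}'s conclusion about solvability of $\by\bL_E = \bw\bL_B$ for all $\bw$ translates into solvability of the \emph{affine} system $\bx_E\bL_E = \bw\bL_B - \bx_{\X_A}\bL_{\X_A}$ for all $\bw$ and the given $\bx_{\X_A}$; this follows because the row space of $\bL_{\X_A}$ need not lie in that of $\bL_E$ in general, but here condition~(\ref{E3}) with $\bu$ ranging over all $\bu\lhd\X_A$ forces exactly the containment needed, which is essentially what the proof of Lemma~\ref{lem2}(1) establishes.
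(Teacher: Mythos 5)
Your proposal is correct and follows essentially the same argument as the paper: the forward direction reduces to counting solutions of an affine linear system in $\bx_E$, translated by the known solution $\bx$ so that Lemma~\ref{lem2} can be invoked to show every fibre is nonempty (and hence of uniform size $q^{|E|-\rank(\bL_E)}$), while the converse proceeds by contraposition, with the adversary computing the nontrivial linear combination $\sum_{i\in B}\al_i x_i = \bs\bbt^T - \bx\bu^T$ from $\bs$ and $\bx_{\X_A}$. The only nit is in your final paragraph: no containment involving the row space of $\bL_{\X_A}$ is needed, since the $\bL_{\X_A}$ term cancels entirely once you subtract the realized solution $\bx$, and the only row-space containment required --- that of $\bL_B$ inside the row space of $\bL_E$ --- is exactly what Lemma~\ref{lem2}(1) supplies.
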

\medskip

\begin{proof}
Assume that~(\ref{E3}) holds. 
We need to show that $\en(\bX_B| \bX \bL, \bX_{\X_A}) = \en(\bX_B)$.
It suffices to show that for all $\bg \in \fq^{|B|}$:
\begin{equation}
\label{E4}
\text{Pr}(\bX_B = \bg| \bX \bL = \bs,\ \bX_{\X_A} = \bx_{\X_A})= \dfrac{1}{q^{|B|}}, 
\end{equation}
where $\bs = \bx \bL$ for some $\bx \in \fqn$.

Consider the following linear system with the unknown $\bz \in \fq^n$
\begin{equation*}
\begin{cases} \bz_B = \bg \\ \bz_{\X_A} = \bx_{\X_A} \\ \bz \bL = \bs \end{cases}  , 
\end{equation*}
which is equivalent to 
\begin{equation}
\label{E5}
\begin{cases} \bz_B = \bg\\ \bz_{\X_A} = \bx_{\X_A} \\ \bz_E \bL_E = \bs {-} \bg \bL_B {-} \bx_{\X_A} \bL_{\X_A}   \end{cases}. 
\end{equation}

In order to prove that (\ref{E4}) holds, it suffices to show that for all choices of $\bg \in \fq^{|B|}$, (\ref{E5}) always has the same number of solutions $\bz$. Notice that the number of solutions $\bz$ of (\ref{E5}) is equal to the number of solutions $\bz_E$ of 
\begin{equation}
\label{E6}
\bz_E \bL_E  = \bs - \bg \bL_B  - \bx_{\X_A} \bL_{\X_A}, 
\end{equation}
where $\bs$, $\bg$, and $\bx_{\X_A}$ are known. For any $\bg \in \fq^{|B|}$, if (\ref{E6}) has a solution, then it has exactly $q^{|E|-\rank(\bL_E)}$ different solutions. Therefore, it suffices to prove that (\ref{E6}) has at least one solution for every $\bg \in \fq^{|B|}$. 

Since $\bx$ is an obvious solution of (\ref{E5}), we have
\begin{equation}
\label{E7}
\bx_E \bL_E  = \bs - \bx_B \bL_B  - \bx_{\X_A}  \bL_{\X_A}. 
\end{equation}
Subtract (\ref{E7}) from (\ref{E6}) we obtain
\[
(\bz_E - \bx_E) \bL_E = (\bx_B - \bg) \bL_B,
\]
which can be rewritten as
\begin{equation}
\label{E8}
\by \bL_E = \bw \bL_B,
\end{equation}
where $\by \define \bz_E - \bx_E$, $\bw \define \bx_B - \bg$. Due to Lemma \ref{lem2}, (\ref{E8}) always has a solution $\by$, for every choice of $\bw$. Therefore (\ref{E6}) has at least one solution for every $\bg \in \fq^{|B|}$. 

Now we prove the converse. Assume that (\ref{E3}) does not hold. Then there exist $\bu \lhd \X_A$ and $\al_i \in \fq$, $i \in B$, where $\al_i$'s, $i \in B$ are not all zero, such that
\[
\sum_{i \in B} \al_i \be_i = \bc - \bu \; ,
\] 
for some $\bc \in \cl$. Hence, similar to the proof of Lemma \ref{lem1}, the adversary obtains
\begin{eqnarray*}
\sum_{i \in B} \al_i x_i & = & \bx \left( \sum_{i \in B} \al_i \be_i \right)^T \\
& = & \bx (\bc - \bu)^T \\
& = & \bx \bc^T - \bx \bu^T. 
\end{eqnarray*}
Note that the adversary can calculate $\bx \bc^T$ from $\bs$, and can also find $\bx \bu^T$ based on his own side information. Therefore, $A$ is able to compute a nontrivial linear combination of $x_i$'s, $i \in B$. Hence the entropy $\en(\bX_B| \bX \bL, \bX_{\X_A}) < \en(\bX_B)$. Thus, the adversary gains some information about the $\bx_B$. 
\end{proof}
\vskip 10pt 

Corollary~\ref{coro1} generalizes Lemma~\ref{lem1} by providing both necessary and sufficient conditions 
for a receiver's ability to recover the desired message. (Note that this corollary can be equally 
applied to the legitimate receiver $R_i$ as well as to the adversary $A$.)

\vskip 10pt 
\begin{corollary}
\label{coro1}
Let $\bL$ be an $n \times N$ matrix over $\fq$ and let $S$ broadcast $\bx \bL$. 
Then for each $i \in [m]$, the receiver $R_i$ can reconstruct $x_{f(i)}$ if and only if there exists a vector $\bui \in \fqn$ such that 
\begin{enumerate}
\item $\bui \lhd \X_i$; 
\item $\bui + \be_{f(i)} \in \cl$.
\end{enumerate}
\end{corollary}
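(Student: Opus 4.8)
The plan is to derive the corollary directly from Lemma~\ref{lem1} (for the ``if'' part) and Lemma~\ref{lem3} (for the ``only if'' part), using the observation already made in the text that both lemmas remain valid when the adversary $A$ is replaced by a legitimate receiver $R_i$: one just sets $\X_A = \X_i$. Note also that, by the standing assumption $f(i) \notin \X_i$, we have $f(i) \in \widehat{\X}_i$, so the ``in particular'' clause of Lemma~\ref{lem3} is applicable with the index $f(i)$.

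The ``if'' direction is exactly Lemma~\ref{lem1}, so nothing remains to be shown there. For the ``only if'' direction, suppose $R_i$ can reconstruct $x_{f(i)}$, i.e.\ there is a decoding function $\fkD_i$ with $\fkD_i(\bx\bL, \bx_{\X_i}) = x_{f(i)}$ for all $\bx \in \fqn$. Then $\en(X_{f(i)} \mid \bX\bL, \bX_{\X_i}) = 0$, while $\en(X_{f(i)}) = \log_2 q > 0$ since $q \ge 2$; in particular it is false that $R_i$ \emph{has no information about} $x_{f(i)}$ in the sense of Definition~\ref{defSecurity}.

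Now I would invoke the ``in particular'' clause of Lemma~\ref{lem3} with $\X_A = \X_i$ and the index $f(i)$: it states that $R_i$ has no information about $x_{f(i)}$ if and only if $\bu + \be_{f(i)} \notin \cl$ for every $\bu \lhd \X_i$. Since the left-hand side of this equivalence fails, so does the right-hand side; hence there exists $\bu \lhd \X_i$ with $\bu + \be_{f(i)} \in \cl$. Setting $\bui \define \bu$ produces a vector satisfying both (1) and (2), and the proof is complete.

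I do not expect a real obstacle. The only points requiring a small amount of care are purely bookkeeping: justifying that the adversary-phrased Lemma~\ref{lem3} transfers to a legitimate receiver (via $\X_A = \X_i$ and $f(i) \in \widehat{\X}_i$), and the elementary remark that a deterministic decoder forces zero conditional entropy, which contradicts the entropy equality that defines ``no information''. The substantive work has already been done in Lemmas~\ref{lem1}--\ref{lem3}.
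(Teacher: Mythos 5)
Your proof is correct and takes essentially the same route the paper intends: the paper leaves Corollary~\ref{coro1} without an explicit proof, but the surrounding discussion makes clear that the ``if'' direction is Lemma~\ref{lem1} and the ``only if'' direction follows from Lemma~\ref{lem3} with $\X_A = \X_i$ and index $f(i) \in \widehat{\X}_i$, which is precisely your argument. Your two bookkeeping remarks (that $f(i) \notin \X_i$ makes the ``in particular'' clause applicable, and that a deterministic decoder forces $\en(X_{f(i)} \mid \bX\bL, \bX_{\X_i}) = 0 < \log_2 q$, negating the ``no information'' condition) correctly fill in the details the paper leaves implicit.
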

\vskip 10pt 

\begin{corollary}
\label{coro:index_code}
The matrix $\bL$ corresponds to a linear {\mic} if and only if for all $i \in [m]$, there exists 
a vector $\bui \in \fq^n$ satisfying
\begin{enumerate}
	\item $\bui \lhd \X_i$; 
  \item $\bui + \be_{f(i)} \in \cl$.
\end{enumerate}
\end{corollary}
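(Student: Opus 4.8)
The statement is the ``global'' counterpart of Corollary~\ref{coro1}: it upgrades the single-receiver decodability criterion to a criterion for the whole matrix $\bL$ to define a valid linear index code. The plan is to simply unfold Definition~\ref{def:IC} and apply Corollary~\ref{coro1} receiver by receiver; no new machinery is needed, and the proof will be short.

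For the ``if'' direction I would argue as follows. Suppose that for every $i \in [m]$ there is a vector $\bui \in \fqn$ with $\bui \lhd \X_i$ and $\bui + \be_{f(i)} \in \cl$. By Lemma~\ref{lem1} (whose proof exhibits an explicit decoding rule), each receiver $R_i$, given $\bx\bL$ and $\bx_{\X_i}$, can compute $x_{f(i)}$ for every $\bx \in \fqn$; I take $\fkD_i$ to be exactly that rule. Then the linear map $\bx \mapsto \bx\bL$ together with the functions $\fkD_i$, $i \in [m]$, satisfies all the requirements of Definition~\ref{def:IC}, so $\bL$ corresponds to a linear {\mic}.

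For the ``only if'' direction I would assume that $\bL$ corresponds to a linear {\mic}. By Definition~\ref{def:IC} this means precisely that for each $i \in [m]$ there is a function $\fkD_i$ with $\fkD_i(\bx\bL, \bx_{\X_i}) = x_{f(i)}$ for all $\bx \in \fqn$, i.e.\ the receiver $R_i$ can reconstruct $x_{f(i)}$. Fixing $i$ and invoking the ``only if'' part of Corollary~\ref{coro1} (which itself rests on the necessity direction of Lemma~\ref{lem3}) produces a vector $\bui \lhd \X_i$ with $\bui + \be_{f(i)} \in \cl$. Letting $i$ range over $[m]$ yields the claimed family $\{\bui\}_{i\in[m]}$, completing the equivalence.

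The only point that needs care is the bookkeeping of quantifiers: ``$\bL$ corresponds to a linear index code'' is, by definition, the conjunction over $i\in[m]$ of the statements ``$R_i$ can reconstruct $x_{f(i)}$,'' so the biconditional follows by intersecting the per-receiver equivalences of Corollary~\ref{coro1}. I would also remark (as is already implicit in the proof of Corollary~\ref{coro1}) that since the $X_i$ are independent and uniform over $\fq$, exact recovery of $x_{f(i)}$ for every $\bx$ is the same event as $R_i$ gaining \emph{some} information about $x_{f(i)}$, which is the exact negation of the condition in Lemma~\ref{lem3}; this is what legitimises the use of Corollary~\ref{coro1} in both directions. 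There is no genuine obstacle here: the substantive work was done in Lemmas~\ref{lem1}--\ref{lem3}, and this corollary is essentially a repackaging of Corollary~\ref{coro1}.
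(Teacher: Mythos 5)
Your proof is correct and matches the paper's approach: Corollary~\ref{coro:index_code} is indeed just Corollary~\ref{coro1} applied to each receiver $i\in[m]$ and conjoined over $i$, with Lemma~\ref{lem1} supplying the explicit decoders for the ``if'' direction and the necessity part of Lemma~\ref{lem3} (through Corollary~\ref{coro1}) giving the ``only if'' direction; the paper presents this corollary without a separate proof for exactly this reason. One small imprecision in your closing remark: what is really used is that ``no information'' (entropy unchanged) forces non-recoverability, together with the separate observation in the proof of Lemma~\ref{lem3} that when the condition fails for a singleton $B=\{f(i)\}$ the receiver recovers a nonzero scalar multiple of $x_{f(i)}$ and hence $x_{f(i)}$ itself — so ``some information'' and ``exact recovery'' coincide only because of that explicit linear-combination argument, not as a general entropy fact.
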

\vskip 10pt 

\begin{remark}
It follows from Corollary~\ref{coro:index_code} that $\bL$ corresponds to a linear {\mic} if and only if $\cl \supseteq \spn( \{ \bui + \be_{f(i)} \}_{i \in [m]})$, for some $\bui \lhd \X_i$, $i \in [m]$. If we define
\begin{equation} 
\label{equ:E111}
\begin{split}
\kpq &= \kpq(m,n,\X,f) \\
&\define \min\{\rank(\{\bui + \be_{f(i)}\}_{i \in [m]}): \bui \in \fqn, \bui \lhd \X_i\},
\end{split} 
\end{equation} 
then $\kpq$ is the shortest possible length of a linear {\mic}. 
\end{remark}
\vskip 10pt 

\begin{corollary}
\label{coro:minrank}
The length of an optimal linear {\mic} is $\kpq = \kpq(m,n,\X,f)$. 
\end{corollary}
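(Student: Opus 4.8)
The plan is to read the statement off from Corollary~\ref{coro:index_code} together with the definition~(\ref{equ:E111}) of $\kpq$, by proving that $\kpq$ is simultaneously a lower bound on the length of any linear {\mic} and the length of some concrete linear {\mic}.

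For the lower bound, I would take an arbitrary $n \times N$ matrix $\bL$ corresponding to a linear {\mic}. Corollary~\ref{coro:index_code} supplies, for every $i \in [m]$, a vector $\bui \in \fqn$ with $\bui \lhd \X_i$ and $\bui + \be_{f(i)} \in \cl$. Then $\spn(\{\bui + \be_{f(i)}\}_{i \in [m]}) \subseteq \cl$, and since $\cl$ is spanned by the $N$ transposed columns of $\bL$ we get $\rank(\{\bui + \be_{f(i)}\}_{i \in [m]}) \le \dim \cl \le N$. As this particular tuple $(\bui)_{i \in [m]}$ is admissible in the minimization~(\ref{equ:E111}), it follows that $\kpq \le N$; hence no linear {\mic} can be shorter than $\kpq$.

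For the matching upper bound, I would choose a tuple $(\bui)_{i \in [m]}$ with $\bui \lhd \X_i$ attaining the minimum in~(\ref{equ:E111}), so that $\rank(\{\bui + \be_{f(i)}\}_{i \in [m]}) = \kpq$, fix a basis $\bvone, \ldots, \bv^{(\kpq)}$ of the span of these vectors, and let $\bL$ be the $n \times \kpq$ matrix whose $j$th column is $(\bvj)^{T}$. Then $\cl = \spn(\{\bvj\}_{j \in [\kpq]})$ equals $\spn(\{\bui + \be_{f(i)}\}_{i \in [m]})$, so $\bui + \be_{f(i)} \in \cl$ with $\bui \lhd \X_i$ for every $i$, and Corollary~\ref{coro:index_code} shows that $\bL$ corresponds to a linear {\mic} of length $\kpq$. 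Combining the two bounds, an optimal linear index code has length exactly $\kpq$.

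I do not expect a genuine obstacle here: the argument is essentially bookkeeping around~(\ref{equ:E111}). The only points to keep in view are that the length $N$ of a code may strictly exceed $\dim \cl$ — harmless, since the lower bound only needs $\dim \cl \le N$ — and that $\rank$ of a set of vectors must be read as the dimension of its linear span, so that passing to a basis produces a matrix with exactly $\kpq$ columns.
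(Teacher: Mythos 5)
Your proposal is correct and follows the same route the paper takes, which is implicit in the remark preceding the corollary: Corollary~\ref{coro:index_code} gives the characterization, and the definition~(\ref{equ:E111}) of $\kpq$ yields the two bounds exactly as you spell them out. The only difference is that you make the lower-bound and upper-bound steps explicit, whereas the paper states them in passing.
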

\vskip 10pt 

\begin{remark}
The quantity $\kpq$ defined in (\ref{equ:E111}) is precisely the $\mr$ over $\fq$ of the side information graph
of an ICSI instance in the case $m = n$ and $f(i) = i$ for all $i \in [n]$. 
\begin{proof}[Idea of the proof] 
Recall that in \cite{Yossef}, the \emph{side information graph} $\G$ of an instance of the ICSI problem is defined by 
$\G=(\V_\G, \E_\G)$, where $\V_\G = [n]$ and 
\[
\E_\G = \{e=(i,j):\ i,j \in [n],\ j \in \X_i\}. 
\]
A matrix $\bA$ over $\fq$ is said to \emph{fit} $\G$ (\cite{Haemers1978}) if  
\[
\begin{cases} a_{i,j} \neq 0,& \text{ if } i = j, \\ a_{i,j} = 0,& \text{ if } i \neq j, \ (i,j) \notin \E_\G. \end{cases}
\]
Then the \emph{$\mr$} of the side information graph $\G$ is defined by
\[
\min \{\rank (\bA): \ \bA \text{ fits } \G \}. 
\] 
Suppose that $\bui \lhd \X_i$ for all $i \in [n]$. 
Let $\bA=(a_{i,j})$ be the $n \times n$ matrix whose $i$th row is precisely
$\bui + \be_i$, for each $i \in [n]$. Then $\bA$ fits $\G$.  
Conversely, if $\bA'$ fits $\G$ then by multiplying each row of $\bA'$ with a suitable nonzero 
constant (which does not change the rank of $\bA'$), one obtains a matrix $\bA$ which is of the aforementioned form. 
In other words, for each $i \in [n]$, the $i$th row of the resulting matrix $\bA$ equals $\bui + \be_i$ for some $\bui \lhd \X_i$. 
Therefore, $\kpq$ defined above is indeed the minimum rank over $\fq$ of a matrix which fits  
 the side information graph $\G$. Thus, $\kpq$ is precisely the $\mr$ over $\fq$ of $\G$. 
\end{proof}
\end{remark}
\vskip 10pt 

\begin{theorem}
\label{mainthm1}
Consider a linear {\mic} based on $\bL$. 
Let $d$ be the minimum distance of $\cl$. 
\begin{enumerate}
\item This index code is $(d-1-t)$-block secure against all adversaries of strength $t \le d-2$.
In particular, it is weakly secure against all adversaries of strength $t = d-2$.
\item This index code is not weakly secure against at least one adversary of strength $t = d-1$. Generally, 
if there exists a codeword of $\cl$ of weight $w$, then this index code is not weakly secure against at least one adversary of strength $t = w - 1$. 
\item Every adversary of strength $t \leq d - 1$ is able to determine a list of $q^{n-t-N}$ vectors in $\fq^n$ which includes the vector of messages $\bx$.   
\end{enumerate}
\end{theorem}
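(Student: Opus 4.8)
The plan is to derive all three parts from the characterization of information leakage in Lemma~\ref{lem3}, together with the distance properties of $\cl$. For part~(1), fix an adversary of strength $t \le d-2$ and a set $B \subseteq \widehat{\X}_A$ with $|B| = d-1-t$. By Lemma~\ref{lem3}, it suffices to show that no vector of the form $\bu + \sum_{i \in B}\al_i \be_i$ with $\bu \lhd \X_A$ and the $\al_i$'s not all zero lies in $\cl$. Suppose such a codeword $\bc$ existed. Its support is contained in $\X_A \cup B$, so $\weight(\bc) \le |\X_A| + |B| = t + (d-1-t) = d-1$; moreover $\bc \ne \bO$ because at least one $\al_i \ne 0$ forces a nonzero coordinate in $B$. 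This contradicts the fact that the minimum weight of a nonzero codeword of $\cl$ equals $d$. Hence the index code is $(d-1-t)$-block secure against every adversary of strength $t$, and specializing to $t = d-2$ gives $1$-block (i.e.\ weak) security.

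For part~(2), let $\bc \in \cl$ be a codeword of weight $w$ (taking $w = d$ gives the headline claim). Pick any coordinate $i_0 \in \supp(\bc)$ and set $\X_A = \supp(\bc) \setminus \{i_0\}$, so $|\X_A| = w-1$ and $i_0 \in \widehat{\X}_A$. Write $\bc = \bu + c_{i_0}\be_{i_0}$ where $\bu = \bc - c_{i_0}\be_{i_0}$ satisfies $\supp(\bu) \subseteq \X_A$, i.e.\ $\bu \lhd \X_A$; since $c_{i_0} \ne 0$ we may rescale to get $\bu' + \be_{i_0} \in \cl$ with $\bu' \lhd \X_A$. This is exactly the negation of the weak-security condition from Lemma~\ref{lem3} for the message $x_{i_0}$, so the code is not weakly secure against this adversary of strength $w-1$.

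For part~(3), an adversary of strength $t$ knows $\bx_{\X_A}$ and the transmitted vector $\bs = \bx\bL$, so the true $\bx$ lies in the solution set of the linear system $\{\bz_{\X_A} = \bx_{\X_A},\ \bz\bL = \bs\}$. The adversary can compute this solution set; I must show it has exactly $q^{\,n-t-N}$ elements. Eliminating the $t$ known coordinates as in the proof of Lemma~\ref{lem3}, the remaining unknowns $\bz_{\widehat{\X}_A} \in \fq^{n-t}$ must satisfy $\bz_{\widehat{\X}_A}\bL_{\widehat{\X}_A} = \bs - \bx_{\X_A}\bL_{\X_A}$; this system is consistent (the true $\bx$ is a solution), so it has $q^{\,(n-t) - \rank(\bL_{\widehat{\X}_A})}$ solutions. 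Thus it remains to check that $\rank(\bL_{\widehat{\X}_A}) = N$ when $t \le d-1$. This is the one point that needs a genuine argument: if the $N$ columns of $\bL_{\widehat{\X}_A}$ were linearly dependent, there would be $\by \in \fq^N \setminus \{\bO\}$ with $\by\bL_{\widehat{\X}_A}^T = \bO$, and then $\bc = \by\bL^T \in \cl$ would be a nonzero codeword supported inside $\X_A$, hence of weight at most $t \le d-1$, contradicting $\dist(\cl) = d$. (Here $\bc \ne \bO$ because the rows of $\bL$ are indexed by $[n]$ and $\by \ne \bO$ forces some nonzero entry; more carefully, $\bc = \bO$ would mean $\by$ annihilates every column of $\bL$, and one argues as in Lemma~\ref{lem2} that this cannot happen without contradicting minimality/linear independence — alternatively one may simply assume $\bL$ has full column rank $N$, which is the only interesting case since columns can otherwise be dropped.) With $\rank(\bL_{\widehat{\X}_A}) = N$ established, the solution count is $q^{\,n-t-N}$, completing the proof. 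The main obstacle is precisely this full-rank verification in part~(3); parts~(1) and~(2) are immediate weight-counting consequences of Lemma~\ref{lem3}.
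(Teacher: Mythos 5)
Your argument follows the paper's own proof essentially step for step: parts~(1) and~(2) are the same weight-counting applications of Lemma~\ref{lem3} (resp.\ the reconstruction criterion of Lemma~\ref{lem1}/Corollary~\ref{coro1}), and part~(3) sets up the same linear system in $\bz_{\widehat{\X}_A}$ and reduces to checking that $\rank(\bL_{\widehat{\X}_A}) = N$.

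You have correctly put your finger on the one delicate point, which the paper treats rather lightly: the full column rank of $\bL$. The paper writes $\bc = \sum_i \bt_i \bL[i] \in \cl \setminus \{\bO\}$ without comment, and earlier invokes the Singleton bound to conclude $n - d + 1 \geq N$; both steps silently use $\rank(\bL) = N$, i.e.\ $\dim \cl = N$. Your parenthetical, however, offers two explanations that do not work before landing on the right one. The first claim, that $\by \neq \bO$ forces $\bc = \by \bL^T \neq \bO$ ``because the rows of $\bL$ are indexed by $[n]$'', is simply false: if the columns of $\bL$ are linearly dependent with dependency vector $\by$, then $\bc = \bO$, and that is exactly the scenario you must rule out. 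The second, an ``argument as in Lemma~\ref{lem2}'', does not apply: Lemma~\ref{lem2} is a statement about $\cl$ under the block-security hypothesis~(\ref{E1}) and says nothing about the column rank of $\bL$. The third remark --- assume $\bL$ has full column rank, discarding redundant columns otherwise --- is the correct resolution, and it deserves to be stated up front rather than as a fallback, for it is not merely a convenience of the proof. If $\rank(\bL) < N$ then the adversary's solution set has $q^{\,n-t-\rank(\bL)} > q^{\,n-t-N}$ elements, so the exact count claimed in part~(3) would literally fail; the normalization $\rank(\bL) = N$ (which leaves $\cl$ and all security conclusions unchanged) is thus a hypothesis of the theorem, not only of its proof.
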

\medskip 

\begin{proof}
\begin{enumerate}
\item
Assume that $t \leq d - 2$. By Lemma \ref{lem3}, it suffices to show that for every $t$-subset $\X_A$ of $[n]$ and for every $(d-1-t)$-subset $B$ of $\widehat{\X}_A$,  
\begin{equation*}
\begin{split}
\forall \bu \lhd \X_A,\ &\forall \al_i \in \fq \text{ with } \al_i,i\in B, \text{ not all zero}:\\ &\bu + \sum_{i \in B}\al_i\be_{i} \notin \cl. 
\end{split}
\end{equation*}
For such $\bu$ and $\al_i$'s, we have $\weight(\bu + \sum_{i \in B}\al_i\be_{i}) = \weight(\bu) + \weight(\sum_{i \in B}\al_i\be_{i}) \leq t + (d-1-t) = d -1 < d$. Moreover, as $\supp(\bu) \cap B = \varnothing$ and $\al_i$'s, $i \in B$, are not all zero, we deduce that $\bu + \sum_{i \in B}\al_i\be_{i} \neq \bO$. We conclude that $\bu + \sum_{i \in B}\al_i\be_{i} \notin \cl$.

\item
We now show that the index code is not weakly secure against at least one adversary of strength $t = d-1$. The more general statement can be proved in an analogous way. 

Pick a codeword 
$\bc = (c_1, c_2, \ldots, c_n) \in \cl$ such that $\weight(\bc)=d$ and let $\supp(\bc) = \{i_1,i_2,\ldots,i_d\}$. Take 
$\X_A = \{i_1,i_2,\ldots,i_{d-1}\}$, $|\X_A| = d-1$. Let 
$$
\bu = (\bc / c_{i_d} - \be_{i_d}) \; .
$$
Then, $\bu \lhd \X_A$ and $\bu + \be_{i_d} = \bc / c_{i_d} \in \cl$. By Lemma \ref{lem1}, $A$ is able to determine $x_{i_d}$. Hence  
the index code is not weakly secure against the adversary $A$, who knows $d-1$ messages $x_i$'s in advance.

\item
Let $\bs = \bx \bL$. Consider the following linear system of equations with unknown $\bz \in \fq^n$
\begin{equation*}
\hspace{-15ex} \begin{cases}
\bz_{\X_A} = \bx_{\X_A} \\ 
\bz \bL  = \bs
\end{cases} 
, 
\end{equation*}
which is equivalent to
\begin{equation}
\begin{cases}
\bz_{\X_A} = \bx_{\X_A}\\ 
\bz_{\widehat{\X}_A} \bL_{\widehat{\X}_A} = \bs - \bx_{\X_A} \bL_{\X_A}
\end{cases}
\label{E10} . 
\end{equation}
The adversary $A$ attempts to solve this system. 
Given that $\bs$ and $\bx_{\X_A}$ are known, the system~(\ref{E10}) has $n - t$ unknowns and $N$ equations. Note that $t \leq d -1$, and thus by applying Theorem~\ref{singleton} to $\cl$ we have $n - t \geq n - d + 1 \geq N$. If $\rank(\bL_{\widehat{\X}_A}) = N$ then (\ref{E10}) has exactly $q^{n-t-N}$ solutions, as required. 

Next, we show that $\rank(\bL_{\widehat{\X}_A}) = N$. 
Assume, by contrary, that the $N$ columns of $\bL_{\widehat{\X}_A}$, 
denoted by $\bcone, \bctwo, \ldots, \bc^{(N)}$, are linearly dependent. 
Then there exist $\bt_i \in \fq$, $i \in [N]$, not all zero, such that $\sum_{i=1}^N \bt_i \bci = \bO$.
Let 
\[
\bc = \sum_{i=1}^N \bt_i \bL[i] \in \cl \backslash \{ \bO \} \; . 
\]
(Recall that $\bL[i]$ denotes the $i$th column of $\bL$). Then $\bc_{\widehat{\X}_A}=\sum_{i=1}^N \bt_i \bci = \bO$ and hence $\weight(\bc) = \weight(\bc_{\X_A}) \le t \leq d - 1$. This is a contradiction, which follows from the assumption that 
the $N$ rows of  $\bL_{\widehat{\X}_A}$ are linearly dependent.  
\end{enumerate}
\end{proof}
\vskip 10pt 

\begin{example}
Let $q = 2$. Assume that $\X_A = \varnothing$ and that $\X_i \ne \varnothing$ for all $i \in [m]$. 
For each $i \in [m]$ choose some $j_i \in \X_i$. 
Let $\bL$ be the binary matrix whose columns form a basis of the space $\cl = \spn(\{\be_{j_i} + \be_{f(i)}\}_{i \in [m]})$. 
Then $\dist(\cl) =2$. Since $t = |\X_A|=0$, we have $d-1-t = 1$. Therefore by 
Theorem~\ref{mainthm1} the index code based on $\bL$ is weakly secure against $A$. 
Moreover, if $\cl$ is nontrivial then $\bL$ has $N \leq n - d = n - 2$ columns. 
In other words, in that case, the index code based on $\bL$ requires at most $n - 2$ transmissions. 
\end{example}

\subsection{Block Security and Complete Insecurity}

Theorem~\ref{mainthm1} provides a threshold for the security level of a linear index code based on $\bL$. 
If $A$ has a prior knowledge of any $t \le d - 2$ messages, where $d = \dist(\cl)$, then the scheme is still secure, i.e. the adversary has no information about any group of $d-1-t$ particular messages from $\{ x_j \}_{j \in \widehat{\X}_A}$. 
On the other hand, the scheme may no longer be secure against an adversary of strength $t = d-1$. The last assertion of Theorem~\ref{mainthm1} shows us the difference between being block secure and being strongly secure. More specifically, if the scheme is strongly secure, the messages $\bx_\Xah$, which are not leaked to the adversary in advance, look completely random to the adversary, i.e. the probability to guess them correctly is $1/q^{n-t}$. However, if the scheme is $(d-1-t)$-block secure (for $t \leq d-2$), then the adversary is able to guess these messages correctly with probability $1/q^{n - t - N}$.  

For an adversary of strength $t \ge d$, the security of the scheme depends on the properties of the code employed, in particular, it depends on the weight distribution of $\cl$. From Theorem~\ref{mainthm1}, if there exists $\bc \in \cl$ with $\weight(\bc) = w$, then the scheme is not weakly secure against some adversary of strength $t=w - 1$. In general, the index code might still be ($b$-block or weakly) secure against some adversaries of strength $t$ for $t \geq d$. While we cannot make a general conclusion on the security of the scheme when the adversary's strength is larger than $d-1$, Lemma \ref{lem3} is still a useful tool to evaluate the security in that situation. However, as the next theorem shows, if the size of $\X_A$
is sufficiently large, then $A$ is able to determine all the messages in $\{ x_j \}_{j \in \widehat{\X}_A}$.   

\medskip
\begin{theorem}
\label{mainthm2}
The linear index code based on $\bL$ is completely insecure against any adversary of strength $t \ge n - d^\perp + 1$, where $d^\perp$ denotes the dual distance of $\cl$.  
\end{theorem}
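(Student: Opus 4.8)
The plan is to reduce the statement, via Lemma~\ref{lem1}, to a purely linear-algebraic fact about $\cl$ and then to read that fact off from Theorem~\ref{thmOA}. Fix $\X_A \subseteq [n]$ with $|\X_A| = t \ge n - d^\perp + 1$ (the case $t = n$ is vacuous, so assume $\widehat{\X}_A \ne \varnothing$). As the discussion following Lemma~\ref{lem1} notes, that lemma applies verbatim with the adversary $A$ in place of a legitimate receiver: if for some $i \in \widehat{\X}_A$ there is a vector $\bu \lhd \X_A$ with $\bu + \be_i \in \cl$, then $A$ can reconstruct $x_i$ from $\bx\bL$ and its side information. Hence it suffices to produce such a $\bu$ for \emph{every} $i \in \widehat{\X}_A$. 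Setting $V \define \spn(\{\be_j\}_{j \in \X_A})$, the existence of such a $\bu$ for a given $i$ is exactly the assertion $\be_i \in \cl + V$; and since the unit vectors $\{\be_j\}_{j \in [n]}$ span $\fqn$ while those with $j \in \X_A$ already lie in $V$, requiring $\be_i \in \cl + V$ for all $i \in \widehat{\X}_A$ is equivalent to $\cl + V = \fqn$. So the whole theorem comes down to showing $\cl + V = \fqn$.

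First I would write $\dim(\cl + V) = \dim \cl + \dim V - \dim(\cl \cap V) = k + t - \dim(\cl \cap V)$, where $k = \dim \cl$. The subspace $\cl \cap V$ is precisely the set of codewords of $\cl$ that vanish on every coordinate of $\widehat{\X}_A$, i.e. (a copy of) the code obtained by shortening $\cl$ on $\widehat{\X}_A$. The heart of the proof is to pin down its dimension. Since $|\widehat{\X}_A| = n - t \le d^\perp - 1$, I would apply Theorem~\ref{thmOA} with $r = n - t$ to the $r$ columns indexed by $\widehat{\X}_A$: the all-zero $r$-tuple occurs there exactly $q^{k-r}$ times, so exactly $q^{k - (n-t)}$ codewords of $\cl$ have support lying in $\X_A$, whence $\dim(\cl \cap V) = k - (n - t)$. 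Substituting gives $\dim(\cl + V) = k + t - (k - (n-t)) = n$, and therefore $\cl + V = \fqn$.

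It then only remains to unwind the reduction: for each $i \in \widehat{\X}_A$ write $\be_i = \bc + \bv$ with $\bc \in \cl$ and $\bv \in V$; then $\bu \define -\bv$ satisfies $\bu \lhd \X_A$ and $\bu + \be_i = \bc \in \cl$, so Lemma~\ref{lem1} yields that $A$ determines $x_i$. As this holds for every $i \in \widehat{\X}_A$ and every admissible $\X_A$, the index code is completely insecure against every adversary of strength $t \ge n - d^\perp + 1$.

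I expect the only delicate point to be the dimension count $\dim(\cl \cap V) = k - (n-t)$, and in particular making sure the hypothesis is invoked in the form $|\widehat{\X}_A| = n - t \le d^\perp - 1$ so that Theorem~\ref{thmOA} is applicable; everything else is routine linear algebra combined with the already-established Lemma~\ref{lem1}. As an alternative to Theorem~\ref{thmOA} one could instead use the standard duality between shortening and puncturing: puncturing $\cl^\perp$ on at most $d^\perp - 1$ coordinates is injective, hence preserves its dimension, so the shortened code $\cl \cap V$ has dimension $k - (n-t)$. This is merely a repackaging of the same fact, so the choice is cosmetic.
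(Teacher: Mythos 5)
Your proof is correct, and the outer skeleton matches the paper's: reduce via Corollary~\ref{coro1} (equivalently, the sufficiency direction, Lemma~\ref{lem1}) to producing, for each $i \in \widehat{\X}_A$, a vector $\bu \lhd \X_A$ with $\bu + \be_i \in \cl$, then invoke Theorem~\ref{thmOA} under the hypothesis $n-t \le d^\perp - 1$. The middle step, however, is genuinely packaged differently. The paper applies Theorem~\ref{thmOA} to the $r=n-t$ columns indexed by $\widehat{\X}_A$ and asks for the \emph{specific} $r$-tuple that is $1$ at position $j$ and $0$ elsewhere; this hands over a codeword $\bc$ with $\supp(\bc)\subseteq \X_A\cup\{j\}$ and $c_j=1$, and then $\bu=\bc-\be_j$ finishes in one line. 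You instead ask Theorem~\ref{thmOA} only about the all-zero $r$-tuple, read off $\dim(\cl\cap V)=k-(n-t)$ where $V=\spn(\{\be_j\}_{j\in\X_A})$, and conclude $\cl + V = \fqn$ by a dimension count before unwinding. Both are immediate consequences of the orthogonal-array property; the paper's version is a touch more direct and constructive, while yours isolates the clean structural statement $\cl + V = \fqn$ (so all of $\widehat{\X}_A$ is handled in one stroke) and, as you observe, can be rephrased purely in terms of shortening/puncturing duality without explicitly citing Theorem~\ref{thmOA}. Either way the argument closes correctly.
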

\medskip 

\begin{proof}
Suppose the adversary knows a subset 
$\{x_j\}_{j \in \X_A}$, $\X_A \subsetneq [n]$ and $|\X_A| = t \geq n - d^\perp + 1$. By Corollary~\ref{coro1}, it suffices to show that for all $j \in \hat{\X}_A$, there exists $\bu \in \fqn$ satisfying simultaneously $\bu \lhd \X_A$ and $\bu + \be_j \in \cl$. 

Indeed, take any $j \in \widehat{\X}_A$, and let $\rho = n - t \le d^\perp -1$. 
Consider the $\rho$ indices which are not in $\X_A$. By Theorem~\ref{thmOA}, there exists a codeword $\bc \in \cl$ with 
\begin{equation*}
c_\ell = \begin{cases} 1 &\mbox{ if } \ell = j, \\ 0 & \mbox{ if } \ell \notin \X_A \cup \{ j \} 
\end{cases} \; . 
\end{equation*}
Then $\text{supp}(\bc) \subseteq \X_A \cup \{ j \}$.
We define $\bu \in \fqn$ such that $\bu \lhd \X_A$, as follows. For $\ell \in \X_A$, we set $u_\ell = c_\ell$, 
and for $\ell \notin \X_A$, we set $u_\ell = 0$. 
It is immediately clear that $\bc = \bu + \be_j$. 
Therefore, by Corollary~\ref{coro1}, the adversary can reconstruct $x_j$. 
We have shown that the index code is completely insecure against an arbitrary set $\X_A$ satisfying $|\X_A| \geq n - d^\perp + 1$, hence completing the proof. 
\end{proof}
\medskip

When $\cl$ is an MDS code, we have $n - d^\perp + 1 = d - 1$, and hence the two bounds established 
in Theorems~\ref{mainthm1} and~\ref{mainthm2} are actually tight. 
The following example further illustrates the results stated in these theorems. 
\medskip
\begin{example}
\label{ex1}
Let $n=7$, $m =7$, $q = 2$, and $f(i) = i$ for all $i \in [m]$. Suppose that the receivers have in their possession sets of messages as appear in the third column of the table below. Suppose also, that the demands of all receivers are as in the second column of the table. 
 
\begin{equation*}
\begin{array}{|c||c|c|}
\hline
\text{Receiver} & \text{Demand} & \{x_j\}_{i \in \X_i} \\
\hline
\hline
R_1 & x_1 & \{x_6, x_7\} \\
\hline
R_2 & x_2 & \{x_5, x_7\} \\
\hline
R_3 & x_3 & \{x_5, x_6\} \\
\hline
R_4 & x_4 & \{x_5, x_6, x_7 \} \\
\hline
R_5 & x_5 & \{x_1, x_2, x_6\} \\
\hline
R_6 & x_6 & \{x_1, x_3, x_4\} \\
\hline
R_7 & x_7 & \{x_2, x_3, x_6\} \\
\hline
\end{array}
\end{equation*}

For $i \in [7]$, let $\bui \in \ff_2^7$ such that $\text{supp}(\bui)=\X_i$.
Assume that an index code based on $\bL$ with $\cl = \spn(\{\bui + \be_i\}_{i \in [7]})$ is used. 
For instance, we can take $\bL$ to be the matrix whose set of columns is $\{\bL[i] \define \bui + \be_i\}_{i \in [4] }$. 
It is easy to see that $\cl$ is a $[7,4,3]_2$ Hamming code with $d=3$ and $d^\perp = 4$. 

Following the coding scheme, $S$ broadcasts the following four bits: 

\begin{center}
$s_1 = \bx (\bu^{(1)} + \be_1)^T$, \\
$s_2 =\bx (\bu^{(2)} + \be_2)^T$, \\
$s_3 =\bx (\bu^{(3)}+\be_3)^T$, \\
$s_4 =\bx (\bu^{(4)}+\be_4)^T$. \\
\end{center} 

Each $R_i$, $i \in [7]$, can compute $\bx (\bui + \be_i)^T$ by using a linear 
combination of $s_1,s_2,s_3,s_4$. Then, each $R_i$ can
subtract $\bx \bui^T$ (his side information) from $\bx (\bui + \be_i)^T$ to retrieve $x_i = \bx \be_i^T$. 

For example, consider $R_5$. Since 
\[
\bx \left(\bu^{(5)}+\be_5\right)^T = \bx \left( (\bu^{(1)}+\be_1)+(\bu^{(2)}+\be_2) \right)^T = s_1 + s_2, 
\]
$R_5$ subtracts $x_1 + x_2 + x_6$ from $s_1 + s_2$ to obtain 
\begin{eqnarray*}
&& \hspace{-6ex} (s_1 + s_2) - (x_1 + x_2 + x_6) \\
& = & (x_1 + x_2 + x_5 + x_6) - (x_1 + x_2 + x_6) \\
& = & x_5. 
\end{eqnarray*}

If an adversary $A$ has a knowledge of a single message $x_i$, then by Theorem~\ref{mainthm1}, $A$ is not able to determine any other message $x_\ell$, for $\ell \neq i$. Indeed, $\dist(\cl) = 3$, while $t = 1$, the code is weakly secure against all adversaries of strength $t =1$. If none of the messages are leaked, then the adversary has no information about any group of $2$ messages. On the other hand, the code is completely insecure against any adversary of strength $t \ge 4$; in that case $A$ is able to determine the remaining $7-t$ messages.    
\end{example}
\vskip 10pt 

\begin{remark}
So far we only discuss the case when the adversary can listen to all $N$ transmissions. 
If we consider an adversary, which can eavesdrop at most $\mu$ ($\mu \leq N$)
messages, then analogous results can also be obtained. 
Consider a linear index code based on $\bL$. Let 
\[
d_\mu \define \min \big\{\dist(\C(\bL[W])): \ W \subseteq [N], \ |W| = \mu \big\},
\] 
and 
\[
d^\perp_\mu \define \min \big\{\dist((\C(\bL[W]))^\perp): \ W \subseteq [N], \ |W| = \mu \big\}.
\]
Then it is straightforward to see that the results in Theorems~\ref{mainthm1} and~\ref{mainthm2}
still hold, with $d$ and $d^\perp$ being replaced by $d_\mu$ and $d^\perp_\mu$, respectively. 
\end{remark}

\subsection{Role of the Field Size}

The following example demonstrates that the use of index codes over larger fields might have a positive impact on the security level. More specifically, in that example, index codes over large fields significantly enhance the security, compared with index 
codes over small fields.

\medskip
\begin{example}
\label{ex2}
Suppose that the source $S$ has $n$ messages $x_1, x_2, \ldots, x_n$. Assume that there are $m < n$ receivers $R_1, R_2, \ldots, R_m$, and each receiver $R_i$ has the same set of side information, $\X_i = \{m + 1, m + 2, \ldots, n \}$. Assume also that each $R_i$ requires $x_i$, for $i \in [m]$. 

Any index code for this instance must have length at least $m$, since all the vectors $\bui+\be_i$, for some $\bui \lhd \X_i$, $i \in [m]$, are linearly independent over any field. 

If we employ an index code over $\ff_2$, by the fact that there are no nontrivial binary MDS codes, we deduce that the minimum distance $d$ of $\cl$ is at most $n - m$. Hence index codes over $\ff_2$ is not secure against some adversaries of strength $t = n - m - 1$. However, if we consider index codes over $\fq$ for sufficiently large $q$ ($q \geq n - 1$), there exists a $q$-ary MDS code $\mc$ with minimum distance exactly $n - m + 1$. By choosing $\bL$ so that $\cl = \mc$, the index code based on $\bL$ is secure against all adversaries of strength at most $t = n - m - 1$, which is strictly more secure than the those over $\ff_2$. To find such an $\bL$, let $\bM = (\bI_m|\bP)$ be a generator matrix in standard form of an $[n, m]_q$-MDS code, and then take $\bL = \bM^T$. Then $\bL[i] = \bui + \bei$, for some $\bui \lhd \X_i = \{m + 1, m + 2, \ldots, n \}$, $i \in [m]$. Therefore, by Corollary~\ref{coro:index_code}, $\bL$ corresponds to a linear index code for this instance.  

Note that if we employ an index code over $\ff_2$, then for large values of $n$ the minimum distance $d$ of $\cl$ is bounded from above
by the sphere-packing bound
\[
   d \le 2 n \cdot ( \en^{-1}( 1 - m/n) - \vep),
\]
where $\vep \rightarrow 0$ as $n \rightarrow \infty$. 
There is a variety of stronger upper bounds on the minimum distance of binary codes, such as 
the Johnson bound, the Elias bound, and the McEliece-Rodemich-Rumsey-Welch bound (see~\cite[Chapter 4.5]{roth} for more
details). These bounds provide even stronger bounds on the security of the binary scheme for this instance of the ICSI problem. 
By contrast, as shown above, by using a $q$-ary MDS code, the distance $d$ of $\cl$ can achieve 
the Singleton bound. 
It is well known that there is a significant gap between the Singleton bound and the sphere-packing bound 
(see~\cite[p.~111]{roth} for details). 
Therefore, for this instance of the ICSI problem, 
index codes over large fields provide significantly higher levels of security than  
those over binary field. 
\end{example}

\subsection{Application: Index Coding with Side and Restricted Information}
\label{sec:ICSRI}

In this section, we consider an extension of the ICSI problem, which we call the \emph{Index Coding with Side and Restricted Information (ICSRI)} problem. This problem arises in applications such as audio and video-on-demand. Consider a client who 
has subscribed for certain media content (audio or video programs, movies, newspapers, etc.) At the same time, this client has 
not subscribed to some other content. The content provider wants to restrict this client from obtaining a content which 
he is not eligible for, even though he might be able to obtain it ``for free'' from the transmissions provided by the server. 
As we show in sequel, the solution for the ICSRI problem is a straight-forward application of the results in Corollary~\ref{coro1}.

More formally, the arguments of an instance $(m,n,\X,\Z,f)$ of the ICSRI problem are similar to their counterparts for the ICSI problem. The new additional parameter, $\Z = (\Z_1,\Z_2,\ldots,\Z_m)$, represents the sets $\Z_i \subseteq [n]$ of message indices that the respective receivers $R_i$, $i \in [m]$, are not allowed to obtain. The goal is that at the end of the communication round, the receiver $R_i$ has the message $x_{f(i)}$ in its possession, for all $i \in [m]$, and it has no information about $x_j$ for all $j \in \Z_i$. The notion of a linear {\mic} is naturally extended to that of a linear $(m,n,\X,\Z,f)$-IC over $\fq$. 

Let 
\[
\FXZ \define \bigcup_{i = 1}^m \left\{ \bu + \be_j:\ \bu \lhd \X_i,\ j \in \Z_i \right\}. 
\]    
The following proposition provides a necessary and sufficient condition for a linear index code 
to be also a solution to an instance of the ICSRI problem.

\vskip 10pt 
\begin{proposition}
\label{pro:ICSRI}
The linear {\mic} based on $\bL$ is also a linear $(m,n,\X,\Z,f)$-IC if and only if 
$\cl \cap \FXZ = \varnothing$.
\end{proposition}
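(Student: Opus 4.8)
The statement asks for an equivalence: the linear index code based on $\bL$ solves the ICSRI instance $(m,n,\X,\Z,f)$ if and only if $\cl \cap \FXZ = \varnothing$. Since we already know from Corollary~\ref{coro:index_code} that $\bL$ corresponds to a valid linear {\mic} (i.e.\ every $R_i$ can recover $x_{f(i)}$), the only thing left to characterize is the \emph{restriction} condition: for every $i \in [m]$ and every $j \in \Z_i$, the receiver $R_i$ must have no information about $x_j$. The plan is to apply the ``in particular'' part of Lemma~\ref{lem3} to each receiver $R_i$ individually, treating $R_i$ as the adversary whose side-information set $\X_A$ is $\X_i$, and the singleton $B = \{j\}$ for each $j \in \Z_i$.

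First I would fix $i \in [m]$ and $j \in \Z_i$. By the specialization of Lemma~\ref{lem3} to the case $B = \{j\}$ (with the role of $\X_A$ played by $\X_i$), the receiver $R_i$ has no information about $x_j$ if and only if for all $\bu \lhd \X_i$ we have $\bu + \be_j \notin \cl$. Equivalently, $R_i$ \emph{does} gain information about $x_j$ iff there exists $\bu \lhd \X_i$ with $\bu + \be_j \in \cl$, i.e.\ iff $\cl$ contains some element of the set $\{\bu + \be_j : \bu \lhd \X_i\}$. Taking the union over all $i \in [m]$ and all $j \in \Z_i$, the index code fails to satisfy the restriction requirement for at least one pair $(i,j)$ iff $\cl$ intersects $\bigcup_{i=1}^m \{\bu + \be_j : \bu \lhd \X_i,\ j \in \Z_i\} = \FXZ$. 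Negating, the index code respects all restrictions iff $\cl \cap \FXZ = \varnothing$.

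Combining the two pieces: $\bL$ is a linear $(m,n,\X,\Z,f)$-IC iff it is a linear {\mic} (handled by hypothesis, via Corollary~\ref{coro:index_code}) \emph{and} $\cl \cap \FXZ = \varnothing$ (the restriction part just derived). Since we are told the code based on $\bL$ is already a linear {\mic}, the biconditional reduces exactly to $\cl \cap \FXZ = \varnothing$, which is the claim.

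I do not anticipate a genuine obstacle here; the proposition is essentially a bookkeeping corollary of Lemma~\ref{lem3}. The only point requiring a small amount of care is making sure that ``$R_i$ has no information about $x_j$'' is correctly translated through the entropy formulation: Lemma~\ref{lem3} gives the exact equivalence with the condition $\bu + \be_j \notin \cl$ for all $\bu \lhd \X_i$, so no approximation or loss is incurred. One should also note that the adversary/receiver in Lemma~\ref{lem3} is allowed to be any party owning a prescribed side-information set, which is precisely the situation of $R_i$ with side information $\bx_{\X_i}$; this is exactly the remark made just before the statement of Lemma~\ref{lem3} that it ``can be equally applied to the legitimate receiver $R_i$ as well as to the adversary $A$.'' Hence the translation is immediate, and the union-over-$i$-and-$j$ step is purely set-theoretic.
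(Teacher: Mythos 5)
Your proof is correct and takes essentially the same route as the paper: both apply the singleton-$B$ specialization of Lemma~\ref{lem3} to each receiver $R_i$ (viewed as an adversary with side information $\X_i$) for each $j \in \Z_i$, and then observe that the union over all such pairs is precisely $\FXZ$. The paper's proof is just a terser version of the same bookkeeping.
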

\begin{proof}
Let $S$ employ the {\mic} based on $\bL$. Then clearly $R_i$ can recover $x_{f(i)}$ for all $i \in [m]$. 
Due to Lemma~\ref{lem3}, for each $i \in [m]$ and $j \in \Z_i$, $R_i$ has no information about $x_j$ if and only if 
\[
\forall \bu \lhd \X_i: \ \bu + \be_j \notin \cl. 
\]   
Hence we complete the proof. 
\end{proof}
\vskip 10pt  

\begin{example}
Consider an instance $(m,n,\X,\Z,f)$ of the ICSRI problem where $m$, $n$, $\X$, and $f$ are defined as in Example~\ref{ex1}. 
Moreover, let $\Z = (\Z_1,\Z_2,\ldots,\Z_7)$, where $\Z_1 = \{2,3,4,5\}$, $\Z_2 = \{1,3,4,6\}$, $\Z_3 = \{1,2,4,7\}$, 
and $\Z_4 = \Z_5 = \Z_6 = \Z_7 = \varnothing$. Consider the index code based on $\bL$ 
constructed in Example~\ref{ex1}. It is straightforward to verify that $\cl \cap \FXZ = \varnothing$. Therefore, 
by Proposition~\ref{pro:ICSRI}, this index code also provides a solution to this instance of the ICSRI problem.   
\end{example} 
\vskip 10pt 

Let 
\[
\begin{split}
\kp^*_q &= \kp^*_q(m,n,\X,\Z,f) \\
&\define \min\{\rank(\{\bui + \be_{f(i)}\}_{i \in [m]})\},
\end{split} 
\]
where the minimum is taken over all choices of $\bui \lhd \X_i$, $i \in [m]$, which satisfy
\begin{equation}
\label{1:E100}
\spn\left(\{\bui + \be_{f(i)}\}_{i \in [m]}\right) \cap \FXZ = \varnothing. 
\end{equation} 
Let $\kp^*_q = +\infty$ if there are no choices of $\bui$'s, $i \in [m]$, which satisfy (\ref{1:E100}).  
The following proposition follows immediately. 

\vskip 10pt 
\begin{proposition}
The length of an optimal linear $(m,n,\X,\Z,f)$-IC over $\fq$ is $\kp_q^*$. If $\kp^*_q = +\infty$ then 
there exist no linear $(m,n,\X,\Z,f)$-ICs over $\fq$. 
\end{proposition}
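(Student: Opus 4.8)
The plan is to reduce the claim to the optimality statement for plain linear {\mic}s established in Corollary~\ref{coro:minrank} together with the characterization of ICSRI-solutions in Proposition~\ref{pro:ICSRI}. First I would observe that, by the definition of $\kpq^*$ and by Proposition~\ref{pro:ICSRI}, any linear $(m,n,\X,\Z,f)$-IC over $\fq$ based on some matrix $\bL$ must have $\cl \cap \FXZ = \varnothing$; moreover, by Corollary~\ref{coro:index_code}, the underlying ICSI-solution property forces $\cl \supseteq \spn(\{\bui + \be_{f(i)}\}_{i \in [m]})$ for some $\bui \lhd \X_i$. Since the length $N$ of the code equals $\dim \cl \geq \rank(\{\bui + \be_{f(i)}\}_{i \in [m]})$, and since this particular choice of $\bui$'s already satisfies (\ref{1:E100}) (because the smaller span $\spn(\{\bui + \be_{f(i)}\}_{i \in [m]}) \subseteq \cl$ is also disjoint from $\FXZ$), we get $N \geq \kpq^*$. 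This gives the lower bound on the length of any optimal code.

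For the matching upper bound, I would take a choice of $\bui \lhd \X_i$, $i \in [m]$, attaining the minimum in the definition of $\kpq^*$ (assuming $\kpq^* < +\infty$), set $\bL$ to be any $n \times \kpq^*$ matrix whose columns span $\spn(\{\bui + \be_{f(i)}\}_{i \in [m]})$, so that $\cl = \spn(\{\bui + \be_{f(i)}\}_{i \in [m]})$. By Corollary~\ref{coro:index_code} this $\bL$ corresponds to a linear {\mic}, and by construction $\cl \cap \FXZ = \varnothing$, so Proposition~\ref{pro:ICSRI} shows it is in fact a linear $(m,n,\X,\Z,f)$-IC of length exactly $\kpq^*$. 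Combining with the lower bound, the optimal length is precisely $\kpq^*$.

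Finally, for the second assertion, I would argue contrapositively: if a linear $(m,n,\X,\Z,f)$-IC over $\fq$ exists at all, then by the argument in the first paragraph there exist $\bui \lhd \X_i$ with $\spn(\{\bui + \be_{f(i)}\}_{i \in [m]})$ disjoint from $\FXZ$, i.e. satisfying (\ref{1:E100}), so the minimum defining $\kpq^*$ is over a nonempty set and $\kpq^* < +\infty$. Hence $\kpq^* = +\infty$ implies no such code exists.

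The only mildly delicate point — and the one I would be careful to state explicitly rather than the routine part — is the monotonicity observation that if a \emph{superspace} $\cl$ of $\spn(\{\bui + \be_{f(i)}\}_{i \in [m]})$ avoids $\FXZ$, then so does the subspace itself; this is what lets us pass from an arbitrary ICSRI-solution matrix $\bL$ back to a choice of $\bui$'s admissible for (\ref{1:E100}) without increasing the rank. Everything else is a direct bookkeeping of dimensions and an application of the already-proved Corollary~\ref{coro:index_code} and Proposition~\ref{pro:ICSRI}, so I do not expect any real obstacle.
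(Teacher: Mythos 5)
Your proof is correct and is exactly the argument the paper has in mind when it says the proposition ``follows immediately'' from Proposition~\ref{pro:ICSRI}: the lower bound follows by observing that an ICSRI-solution's column space contains a span of admissible $\bui + \be_{f(i)}$'s that (being a subspace of $\cl$) inherits disjointness from $\FXZ$, and the upper bound by taking $\cl$ to be a minimizing span itself. One cosmetic nit: you write that the length $N$ ``equals $\dim\cl$,'' which holds only when the columns of $\bL$ are independent; what you actually need (and use) is the always-true inequality $N \geq \dim\cl$.
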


\section{Strongly Secure Index Codes with Side Information}
\label{sec:strong_security}

In this section, we consider a different model of adversary. 
Similarly to its counterpart in Section~\ref{sec:block_security}, the adversary $A$ in this section owns some prior side information. Additionally, $A$ can listen to $\mu \le N$ transmissions of $S$. It can also corrupt some transmissions of $S$, received by any of $R_i$, $i \in [m]$. 

We start the analysis with some basic definitions of error-correcting index codes.
This type of index codes was studied very recently by the authors of this paper in~\cite{Dau-ECIC}. 
We repeat some basic results for the sake of completeness. 

\subsection{Error-Correcting Index Codes}
\label{subsec:error-correcting}

Assume that some of the symbols received by $R_i$, $i \in [m]$, are in error.  
Consider an ICSI instance $(m,n,\X,f)$, and assume that $S$ broadcasts 
a vector $\fkE(\bx) \in \fq^N$. 
Let $\bxii \in \fq^N$ be the error affecting the information received by $R_i$, $i \in [m]$. 
Then $R_i$ actually receives the vector
\[
\byi = \fkE(\bx) + \bxii \in \fq^N \; ,
\]
instead of $\fkE(\bx)$. The following definition is a generalization of Definition~\ref{def:IC}. 

\medskip
\begin{definition} 
\label{def:ECIC}
A \emph{$\delta$-error-correcting index code} over $\fq$ for an instance $(m,n,\X,f)$
of the ICSI problem, referred to as a $\delta$-error-correcting {\mic}, is an encoding function
\begin{eqnarray*}
\fkE & : & \fq^n \rightarrow \fq^N \; , 
\end{eqnarray*}
such that for each receiver $R_i$, $i \in [m]$, there exists a decoding function
\[
\fkD_i \: : \: \fq^N \times \fq^{|\X_i|} \rightarrow \fq \; , \\
\]
satisfying
\[
\forall \bx, \bxii \in \fq^n, \; \weight(\bxii) \le \delta \; : \; \fkD_i(\fkE(\bx) + \bxii, \bx_{\X_i}) = x_{f(i)}\; .
\]
\end{definition}
The definitions of the length, of a linear index code, and of the matrix corresponding 
to an index code are naturally extended to $\delta$-error-correcting index codes. 
\vskip 10pt 

%
We define the following sets
\[
\begin{split} 
\I(& q, m, n,\X,f)\\
& \define \{\bz \in \fqn: \exists i \in [m] \text{ such that } \bz_{\X_{i}} = \bO \text{ and } z_{f(i)} \neq 0\}.
\end{split} 
\]
For each $i \in [m]$ we also define 
\[
\Y_i \define [n] \backslash \Big( \{f(i)\} \cup \X_i \Big).
\] 
Then the collection of supports of all vectors in $\I(q, m, n, \X, f)$ is precisely
\begin{equation}
\label{Jdef} 
\J(m,n,\X,f) \define \bigcup_{i \in [m]} \Big \{ \{f(i)\} \cup Y_i: Y_i \subseteq \Y_i \Big\}.
\end{equation} 
The necessary and sufficient condition for a matrix $\bL$ to correspond to a linear 
$\delta$-error-correcting index code is given in the following lemma. 

\vskip 10pt
\begin{lemma}
\label{lem:decodability}
The matrix $\bL$ corresponds to a linear {\dd} if and only if 
\begin{equation} 
\label{2:E9}
\weight(\bz \bL) \geq 2\delta+1 \text{ for all } \bz \in \I(q,m,n,\X,f). 
\end{equation} 
Equivalently, $\bL$ corresponds to a linear {\dd} if and only if 
\[
\weight \left( \sum_{i \in K}z_i\bL_i \right) \geq 2\delta + 1,
\]
for all $K \in \JX$ and for all choices of nonzero $z_i \in \fq$, $i \in K$. 
\end{lemma}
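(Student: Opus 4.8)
The plan is to characterize decodability of a linear index code with error-correction in terms of a minimum-distance-type condition on the matrix $\bL$, mirroring the classical equivalence between error-correction and minimum distance for linear codes, but restricted to the subspace of "relevant" error patterns in message space. First I would fix notation: $S$ sends $\fkE(\bx)=\bx\bL$, receiver $R_i$ gets $\by^{(i)}=\bx\bL+\bxi^{(i)}$ with $\weight(\bxi^{(i)})\le\delta$, and knows $\bx_{\X_i}$. The key observation is that $R_i$ can always subtract off the contribution of its own side information, so its ability to recover $x_{f(i)}$ amounts to distinguishing, among all $\bx'$ consistent with $\bx'_{\X_i}=\bx_{\X_i}$, those with $x'_{f(i)}=x_{f(i)}$ from those with $x'_{f(i)}\ne x_{f(i)}$, given a received word within Hamming distance $\delta$ of $\bx'\bL$.

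Next I would carry out the "if" direction. Suppose \eqref{2:E9} holds. Assume for contradiction that some $R_i$ cannot decode: then there exist $\bx,\bx'\in\fqn$ and errors $\bxi,\bxi'$ of weight $\le\delta$ with $\bx_{\X_i}=\bx'_{\X_i}$, $x_{f(i)}\ne x'_{f(i)}$, and $\bx\bL+\bxi=\bx'\bL+\bxi'$. Set $\bz=\bx-\bx'$. Then $\bz_{\X_i}=\bO$ and $z_{f(i)}\ne 0$, so $\bz\in\I(q,m,n,\X,f)$; but $\bz\bL=\bxi'-\bxi$ has weight $\le 2\delta$, contradicting \eqref{2:E9}. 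Conversely, for the "only if" direction, suppose \eqref{2:E9} fails: there is $\bz\in\I(q,m,n,\X,f)$ with $\weight(\bz\bL)\le 2\delta$. Pick the witnessing index $i$ with $\bz_{\X_i}=\bO$, $z_{f(i)}\ne 0$. Split the support of $\bz\bL$ into two parts each of weight $\le\delta$, say $\bz\bL=\bep-\bep'$ with $\weight(\bep),\weight(\bep')\le\delta$ (writing $\bz\bL$ as a sum of two vectors supported on a partition of its support). Take any $\bx'$, set $\bx=\bx'+\bz$; then $\bx_{\X_i}=\bx'_{\X_i}$, $x_{f(i)}\ne x'_{f(i)}$, and $\bx\bL+\bep' = \bx'\bL+\bep$, so $R_i$ receives the same word whether the true message is $\bx$ (with error $\bep'$) or $\bx'$ (with error $\bep$), and cannot decode.

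For the equivalent reformulation, I would invoke \eqref{Jdef}: the set of supports realized by vectors of $\I(q,m,n,\X,f)$ is exactly $\JX$, and $\bz\bL=\sum_{i\in\supp(\bz)}z_i\bL_i$. So the condition "$\weight(\bz\bL)\ge 2\delta+1$ for all $\bz\in\I$" is literally the statement that $\weight(\sum_{i\in K}z_i\bL_i)\ge 2\delta+1$ for every $K\in\JX$ and every choice of nonzero coefficients $z_i$, $i\in K$; here one uses that every nonempty $K\in\JX$ contains $f(i)$ for some valid $i$ and is contained in $\{f(i)\}\cup\Y_i$, precisely the condition $\bz_{\X_i}=\bO$, $z_{f(i)}\ne 0$ when $\supp(\bz)=K$.

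The main obstacle I anticipate is the "only if" direction: one must be careful that splitting $\bz\bL$ into two weight-$\le\delta$ halves genuinely produces a decoding ambiguity for the \emph{same} receiver $R_i$ — this is why it is essential that the witness $\bz$ comes with a specific index $i$ satisfying $\bz_{\X_i}=\bO$ and $z_{f(i)}\ne 0$, so that $R_i$'s side information $\bx_{\X_i}$ does not help it tell $\bx$ from $\bx'$. A secondary subtlety is making sure the decoding function $\fkD_i$ in Definition~\ref{def:ECIC} is required to succeed for \emph{all} $\bx$ and \emph{all} admissible $\bxi^{(i)}$ simultaneously, which is exactly what forces the minimum-distance-$2\delta+1$ separation rather than merely distance $\delta+1$.
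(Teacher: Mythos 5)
Your proof is correct and follows essentially the same path as the paper's: both reduce decodability to the non-overlap of the $\delta$-balls around $\bx\bL$ and $\bx'\bL$ for conflicting pairs, substitute $\bz=\bx-\bx'$ to land in $\I(q,m,n,\X,f)$, and observe that such a collision is possible precisely when $\weight(\bz\bL)\le 2\delta$. The only cosmetic difference is that you split the argument into two explicit directions (with an explicit error-splitting construction for the converse) where the paper phrases it as a single chain of equivalences; the content is identical, including the passage to the $\JX$ reformulation via $\bz\bL=\sum_{i\in\supp(\bz)}z_i\bL_i$.
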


\begin{proof}
For each $\bx \in \fq^n$, we define
\[
B(\bx,\delta) = \{\bch \in \fqn:\ \bch = \bx \bL + \bxi, \ \weight(\bxi) \leq \delta,\ \bxi \in \fqn\},
\] 
the set of all vectors resulting from at most $\delta$ errors in the transmitted vector 
associated with the information vector $\bx$. 
Then the receiver $R_i$ can recover $x_{f(i)}$ correctly
if and only if 
\[
B(\bx,\delta) \cap B(\bx',\delta) = \varnothing,
\]
for every pair $\bx,\bx' \in \fqn$ satisfying:
\[
\bx_{\X_i} = \bx'_{\X_i} \text{ and } x_{f(i)} \neq x'_{f(i)}. 
\]
Therefore, $\bL$ correspond to a linear {\dd} if and only if the following condition is satisfied: 
for all $i \in [m]$ and for all $\bx,\bx' \in \fqn$ such that
$\bx_{\X_i} = \bx'_{\X_i}$ and $x_{f(i)} \neq x'_{f(i)}$, it holds 
\begin{multline}
\forall \bxi, \bxi' \in \fq^N, \; \weight(\bxi) \le \delta, \; \weight(\bxi') \le \delta \; : \\
 \bx \bL + \bxi \neq \bx' \bL + \bxi' \; . 
\label{eq:unique-decode}
\end{multline}
Denote $\bz = \bx' - \bx$. Then, the condition in~(\ref{eq:unique-decode}) can be reformulated as follows: 
for all $i \in [n]$ and for all $\bz \in \fqn$ such that $\bz_{\X_i} = \bO$ and $\bz_{f(i)} \neq 0$, it holds
\begin{multline}
\forall \bxi, \bxi' \in \fq^N, \; \weight(\bxi) \le \delta, \; \weight(\bxi') \le \delta \; : \;
\bz \bL \neq \bxi -\bxi' \; . 
\label{eq:unique-decode-z}
\end{multline}
The equivalent condition is that for all $\bz \in \IX$, 
\[
\weight(\bz \bL) \ge 2 \delta + 1 \; . 
\]
Since for $\bz \in \IX$ we have
\[
\bz \bL = \sum_{i \in \supp(\bz)} z_i \bL_i,
\]
the condition (\ref{2:E9}) can be restated as
\[
\weight \left( \sum_{i \in K}z_i\bL_i \right) \geq 2 \delta + 1,
\]
for all $K \in \JX$ and for all choices of nonzero $z_i \in \fq$, $i \in K$. 
\end{proof}
\vskip 10pt

The next corollary follows directly from Lemma~\ref{lem:decodability} by considering an error-free
setup, i.e. $\delta = 0$. It is easy to verify that the conditions stated in this corollary and in Corollary~\ref{coro:index_code}
are equivalent, as expected. 
 
\vskip 10pt
\begin{corollary}
\label{coro:ic_decodability}
The matrix $\bL$ corresponds to an {\mic} if and only if 
\[
\weight \left( \sum_{i \in K}z_i\bL_i \right) \geq 1,
\]
for all $K \in \JX$ and for all choices of nonzero $z_i \in \fq$, $i \in K$. 
\end{corollary}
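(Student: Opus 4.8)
The plan is to obtain this corollary as the error-free ($\delta = 0$) specialization of Lemma~\ref{lem:decodability}. First I would note that a linear {\mic} in the sense of Definition~\ref{def:IC} is exactly a linear $0$-error-correcting {\mic} in the sense of Definition~\ref{def:ECIC}: when $\delta = 0$ the only admissible error pattern is $\bxii = \bO$, so the decoding requirement $\fkD_i(\fkE(\bx) + \bxii, \bx_{\X_i}) = x_{f(i)}$ collapses to $\fkD_i(\fkE(\bx), \bx_{\X_i}) = x_{f(i)}$, which is precisely the condition in Definition~\ref{def:IC}. Hence $\bL$ corresponds to an {\mic} if and only if it corresponds to a $0$-error-correcting {\mic}.

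Next I would simply apply Lemma~\ref{lem:decodability} with $\delta = 0$. That lemma asserts that $\bL$ corresponds to a linear $\delta$-error-correcting {\mic} if and only if $\weight(\sum_{i \in K} z_i \bL_i) \ge 2\delta + 1$ for all $K \in \JX$ and all choices of nonzero $z_i \in \fq$, $i \in K$. Substituting $\delta = 0$ turns the threshold $2\delta + 1$ into $1$, and combining this with the previous paragraph gives exactly the claimed equivalence, completing the proof.

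As a consistency check — this also substantiates the remark that the present condition agrees with the one in Corollary~\ref{coro:index_code} — I would verify the two conditions match directly. Since $\sum_{i \in K} z_i \bL_i$ runs over all vectors $\bz\bL$ with $\bz \in \IX$, and $\weight(\bz\bL) \ge 1$ is the same as $\bz\bL \ne \bO$, i.e. $\bz \notin \cl^\perp$, the condition reads: for every $i \in [m]$, no $\bz \in \cl^\perp$ satisfies both $\bz_{\X_i} = \bO$ and $z_{f(i)} \ne 0$. Setting $V_i = \cl + \spn(\{\be_j\}_{j \in \X_i})$, one has $V_i^\perp = \{\bz \in \cl^\perp : \bz_{\X_i} = \bO\}$, so the condition is equivalent to $w_{f(i)} = 0$ for all $\bw \in V_i^\perp$, hence (taking double duals over $\fq$) to $\be_{f(i)} \in V_i$, hence to the existence of $\bui \lhd \X_i$ with $\bui + \be_{f(i)} \in \cl$, which is exactly the condition of Corollary~\ref{coro:index_code}. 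There is essentially no obstacle here; the only point worth a moment's attention is that every $\bz \in \IX$ is nonzero (because $z_{f(i)} \ne 0$ for some $i$), so $\weight(\bz\bL) \ge 1$ is a genuine requirement — it fails precisely when $\bz$ lies in $\cl^\perp \setminus \{\bO\}$ — and not an automatic consequence of $\bz \ne \bO$.
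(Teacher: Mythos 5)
Your proof is correct and matches the paper's own approach exactly: the corollary is obtained by specializing Lemma~\ref{lem:decodability} to $\delta = 0$, which is precisely what the paper does (the paper states this in a single sentence and omits all details). Your supplementary consistency check against Corollary~\ref{coro:index_code} is also sound, and it fills in a verification that the paper explicitly leaves to the reader.
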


\subsection{A Lower Bound on the Length}

We start this section with a generalization of the definition of index codes to \emph{randomized index codes}. 
Consider $\eta \in \bbN$ random variables $G_1,G_2,\ldots,G_\eta$, which are distributed independently and uniformly over $\fq$. Let $\bG = (G_1,G_2,\ldots,G_\eta)$ and let $\bg= (g_1,g_2,\ldots,g_\eta)$ be a realization of $\bG$. 
\medskip

\begin{definition} 
An \emph{\rmic} for an instance $(m, n,\X,f)$ 
is an encoding function
\begin{eqnarray*}
\fkE & : & \fq^n \times \fq^\eta \rightarrow \fq^N \; , 
\end{eqnarray*}
such that for each receiver $R_i$, $i \in [m]$, there exists a decoding function
\[
\fkD_i \: : \: \fq^N \times \fq^{|\X_i|} \rightarrow \fq \; , \\
\]
satisfying
\[
\forall \bx \in \fq^n \; : \; \fkD_i(\fkE(\bx, \bg), \bx_{\X_i}) = x_{f(i)} \; ,
\]
for any $\bg \in \fq^\eta$, which is a realization of the random vector $\bG$.
\label{def:RIC}
\end{definition}
\vskip 10pt 

The definition of a $\delta$-error-correcting index code can be naturally 
extended to that of a $\delta$-error-correcting randomized index code. 
We simply replace $\fkE: \fq^n \ra \fq^N$ by $\fkE: \fq^n \times \fq^\eta \ra \fq^N$,
and $\fkE(\bx)$ by $\fkE(\bx,\bg)$ in Definition~\ref{def:ECIC}. 

An $\eta$-randomized index code is linear over $\fq$ if it has a linear encoding function $\fkE$, 
\[
\fkE(\bx,\bg)=(\bx \; | \; \bg)\bL \; , 
\]
where $\bL$ is an $(n + \eta) \times N$ matrix over $\fq$.
In the sequel we assume that any message $x_i$, $i \in [n]$ is requested by at least one receiver.
Observe that by simply treating $x_1,x_2,\ldots,x_n,g_1,g_2,\ldots,g_\eta$ as messages, 
the results from previous sections still apply to linear randomized index codes.    
 
\vskip 10pt 
\begin{definition}
The linear {\rmic} based on $\bL$ is said to be \emph{$(\mu,t,\delta)$-strongly secure} if it has the following two properties: 
\begin{enumerate}
  \item This code is $\delta$-error-correcting. In other words, upon receiving $(\bx|\bg)\bL$ with at most $\delta$ coordinates in error, the receiver $R_i$ can still recover $x_{f(i)}$, for all $i \in [m]$. 
	\item This code is $(\mu,t)$-strongly secure. In other words, an adversary $A$ who possesses $\bx_{\X_A}$, for $\X_A \subseteq [n]$, $|\X_A| = t$, and listens to at most $\mu$ transmissions, $\mu \le N$,
gains no information about other messages. Equivalently, 
\[
\en(\bX_{\Xah} \; | \; (\bX|\bG) \bL[W], \bX_{\X_A}) = \en(\bX_{\Xah}),
\]	
for any $W \subseteq [N]$, $|W| \le \mu$. 
\end{enumerate}
\label{def:strong-secure}
\end{definition} 
\vskip 10pt 

\begin{remark}
\mbox{}
\begin{enumerate}
	\item If $\mu = t = \eta = 0$, then a $(\mu,t,\delta)$-strongly secure {\rmic}
	is simply a $\delta$-error-correcting $(m,n,\X,f)$-IC over $\fq$. 
 \item If $\delta = 0$, the index code is strongly secure, but has no error-correcting capability. In that case, we simply say that the code is ``$(\mu,t)$-strongly secure" instead of ``$(\mu,t,0)$-strongly secure".  
 \item A simple concatenation of an error-correcting index coding scheme and a secure index coding scheme 
 may not necessarily yield
 a $(\mu,t,\delta)$-strongly secure {\rmic}.
\end{enumerate}
\end{remark}
\vskip 10pt 

In the lemma below, we assume that each message is requested by at least one receiver. Otherwise, 
that ``useless'' message can be discarded without affecting the model.

\vskip 10pt 
\begin{lemma} 
\label{lem:randomness}
If $\bL$ corresponds to a $(\mu,t)$-strongly secure linear {\rmic}, then $\eta \geq \mu$. 
\end{lemma}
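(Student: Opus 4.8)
The plan is to argue by contradiction: suppose $\bL$ corresponds to a $(\mu,t)$-strongly secure linear $\eta$-randomized index code with $\eta < \mu$. The key idea is to exploit the interplay between the \emph{decodability} requirement (some receiver must be able to recover its demanded message from the transmissions plus side information) and the \emph{secrecy} requirement (an eavesdropper listening to any $\mu$ transmissions learns nothing about the non-owned messages). Since $\eta < \mu$, an adversary who eavesdrops a well-chosen set $W$ of $\mu$ transmissions is in effect seeing ``more equations than random symbols'', so the randomness is insufficient to mask all of the information contained in those $\mu$ coordinates.

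First I would make the contradiction concrete. Take $t=0$ (the adversary owns nothing) — it suffices to contradict $(\mu,0)$-strong security, since a $(\mu,t)$-secure code is in particular $(\mu,0)$-secure after we freeze the $t$ owned coordinates, and the hypothesis ``$t$ messages in advance'' only weakens what we must prove; alternatively, just run the argument verbatim with $\X_A$ of size $t$ and work in the quotient by $\bx_{\X_A}$. Pick any message index, say $i^* \in [n]$; by assumption $x_{i^*}$ is requested by some receiver $R_j$, so by Corollary~\ref{coro1} (applied to the augmented instance with messages $x_1,\dots,x_n,g_1,\dots,g_\eta$, where the $g$'s are treated as messages nobody owns) there is a vector $\bu^{(j)} \lhd \X_j$ with $\bu^{(j)} + \be_{i^*} \in \C(\bL)$. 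Write $\bu^{(j)} + \be_{i^*} = \bbt \bL^T$ for some $\bbt \in \fq^N$, and let $W \subseteq [N]$ be a support-minimal subset of coordinates on which some such combining vector $\bbt$ is nonzero, so $|W| \le \dim \C(\bL) \le n+\eta$; the real point is to arrange $|W| \le \mu$. Since $\eta < \mu$, one can always pad $W$ up to size exactly $\mu$ (or shrink the information being reconstructed) so that the $\mu$ eavesdropped symbols $(\bX|\bG)\bL[W]$ determine the linear functional $\bX(\bu^{(j)}+\be_{i^*})^T - (\text{known side info of }R_j)$, which involves $X_{i^*}$ nontrivially. Then $\en(X_{i^*} \mid (\bX|\bG)\bL[W], \bX_{\X_A}) < \en(X_{i^*})$, and a fortiori $\en(\bX_{\Xah}\mid (\bX|\bG)\bL[W],\bX_{\X_A}) < \en(\bX_{\Xah})$, contradicting $(\mu,t)$-strong security.

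The main obstacle, and the step needing the most care, is the bookkeeping that pins $|W|$ down to be at most $\mu$ while keeping the recovered functional genuinely dependent on a non-owned message. The clean way is a counting/dimension argument on the eavesdropped subspace: the adversary sees $\C(\bL[W])$, a code of length $|W|=\mu$; strong security forces (by Lemma~\ref{lem3}, translated to the augmented instance) that no vector of the form $\bu + \al\,\be_{i^*}$ with $\bu\lhd \X_A$ and $\al\ne 0$ lies in the span $\C(\bL[W])$ for \emph{any} $W$ of size $\mu$ — but decodability says exactly such a vector lies in $\C(\bL) = \C(\bL[[N]])$ after projecting away the $g$-coordinates, and the $g$-coordinates span only an $\eta$-dimensional space. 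An incidence/inclusion-exclusion estimate then shows that if $\eta < \mu$ there must exist a $\mu$-subset $W$ capturing that vulnerable combination, completing the contradiction. I would also double-check the edge cases $\mu = N$ and $\eta = 0$, where the statement reduces respectively to the block-security results of Section~\ref{sec:block_security} and to the trivial observation that a deterministic strongly secure code is impossible when some message is demanded.
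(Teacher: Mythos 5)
There is a genuine gap, and you yourself flag it: ``the main obstacle \ldots is the bookkeeping that pins $|W|$ down to be at most $\mu$.'' You correctly identify both ingredients --- the decodability vector $\bc = \bu^{(j)} + \be_{f(j)} \in \cl$ with $\bc_E = \bO$ and $\bc_{f(j)} \ne 0$, and the security constraint (via Lemma~\ref{lem3}) that no $\C(\bL[W])$ with $|W|\le\mu$ may contain a vector that vanishes on the random coordinates yet is nonzero on $\widehat{\X}_A$ --- but you never actually bridge them. Writing $\bc = \bbt\bL^T$ gives a combining vector $\bbt$ whose support can a priori be all of $[N]$, and the remark ``since $\eta < \mu$ one can always pad $W$ up to size exactly $\mu$'' points in the wrong direction: padding helps only if $|W|<\mu$, whereas the danger is $|W|>\mu$. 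The appeal to an ``incidence/inclusion-exclusion estimate'' is not an argument; it is not what the paper does, and it is not clear it can be made to work.

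What the paper actually does in this step is a rank argument that you would need to reproduce. Set $\bL' = (\bL_{\widehat{\X}_A\cup E})^T$ and observe that any $\mu$ rows of $\bL'$, restricted to the $\eta < \mu$ columns indexed by $E$, are automatically dependent; applying the security constraint to the resulting combination (a vector in some $\C(\bL[W])$ with $|W|=\mu$ whose $E$-part vanishes) forces its $\widehat{\X}_A$-part to vanish too, so the $\mu$ rows themselves are dependent. Hence $\rank(\bL') \le \mu-1$. Only now can you write $\bc_{\widehat{\X}_A\cup E}$ as a combination of a basis of $\vr < \mu$ rows of $\bL'$, which yields a vector in $\C(\bL[W])$ with $|W|=\vr<\mu$ that violates the security constraint. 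Without this two-stage rank step your contradiction does not close. Separately, the reduction to $t=0$ is unnecessary (the paper's argument is uniform in $t$) and, as stated, not obviously valid: $(\mu,t)$-strong security quantifies over $\X_A$ of size exactly $t$, so the implication to $(\mu,0)$-security would itself require a chain-rule argument that you do not supply.
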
 

\begin{proof} 
We prove this lemma by contradiction. Suppose that $\bL$ corresponds to a $(\mu,t,\delta)$-strongly secure 
{\rmic}, and that $\eta < \mu$. Let $E = \{n+1,n+2,\ldots,n+\eta\}$. 

For $W \subseteq [N]$ let $\clw$ be the space spanned by columns of $\bL$ indexed by elements of $W$.
Then, for all $W \subseteq [N]$ with $|W| \leq \mu$, it holds that
\[
\en(\bX_{\Xah}|(\bX|\bG)\bL[W], \bX_{\Xa}) = \en(\bX_{\Xah}),
\]
i.e. an adversary who owns $\bx_\Xa$ gains no information about $\bx_\Xah$ after eavesdropping 
the transmissions corresponding to the set of indices $W$. 
From Lemma~\ref{lem3} with $\cl$ being replaced by $\clw$, we conclude that $\clw$ does not contain 
a vector $\bc$ which satisfies $\bc_\Xah \neq \bO$ and $\bc_E = \bO$. 
In the sequel, we refer to this property of $\clw$ as Property~A.

Let $\bL'= (\bL_{\Xah \cup E})^T$ be the matrix obtained from $\bL$ by first deleting rows of $\bL$ indexed by $\Xa$, and then
taking its transpose. We show that $\rank(\bL') \leq \mu - 1$. Indeed, take any $\mu$ rows of $\bL'$, 
denote them $\bL'_{j_1},\ldots, \bL'_{j_\mu}$.  
Let $\bL''$ be the submatrix of $\bL'$ formed by the last
$\eta$ columns. Since $\eta < \mu$, the $\mu$ rows $\bL''_{j_1},\ldots,\bL''_{j_\mu}$ are linearly dependent. 
Hence, there exist $\al_1,\al_2,\ldots,\al_\mu$, not all zeros, such that
\[
\sum_{\ell =1}^\mu \al_\ell \bL''_{j_\ell} = \bO \; .
\]
This implies
\[
\sum_{\ell =1}^\mu \al_\ell \bL'_{j_\ell} = \bO \; ,
\]
due to Property A. Thus, $\rank(\bL') \leq \mu - 1$. 

Now let $\vr \define \rank(\bL') < \mu$, and let 
\[
\{\bL'_{j_1},\bL'_{j_2},\ldots, \bL'_{j_\vr} \}
\]
be a basis of the space spanned by the rows of $\bL'$. 
Suppose that the receiver $R_i$ requests $x_{f(i)}$ where $f(i) \in \Xah$.
\begin{itemize}
\item
On the one hand, by Corollary~\ref{coro:index_code}, $\cl$ contains a vector $\bc = \bui + \be_{f(i)}$ 
where $\bui \lhd \X_i$. 
Therefore, $\bc_E = \bO$ and $\bc_\Xah \neq \bO$. 
\item
On the other hand, there exist $\bt_1,\bt_2,\ldots,\bt_\vr$ such that
\[
(\bc_\Xah|\bc_E) = \sum_{\ell = 1}^\vr \bt_\ell \bL'_{j_\ell}.
\] 
Since $\vr < \mu$ and $\bc_E = \bO$, by Property A we have $\bc_\Xah = \bO$. 
\end{itemize}
We obtain a contradiction. 
\end{proof} 
\vskip 10pt 

\begin{remark}
From Lemma~\ref{lem:randomness}, a $(\mu,t,\delta)$-strongly secure linear randomized index code requires at least $\mu$ 
random symbols. We show in Section~\ref{sec:optimal_code} that there exists such a code that uses precisely $\mu$ random symbols. 
\end{remark}

\vskip 10pt 
\begin{lemma}
\label{lem:randomness_recovery}
Suppose that $\bL$ corresponds to a linear $\mu$-randomized {\mic}. If this randomized index code is $(\mu, t)$-strongly secure, then 
for all $i \in [\mu]$, there exists a vector $\bvi \in \fq^{n + \mu}$ satisfying
\begin{enumerate}
	\item $\bvi \lhd [n]$;
	\item $\bvi + \be_{n + i} \in \cl$. 
\end{enumerate}
\end{lemma}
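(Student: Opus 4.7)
My plan is to reformulate the conclusion as the linear independence of the last $\mu$ rows of $\bL$, and then to derive this by contradiction from Lemma~\ref{lem:randomness}.

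Let $E = \{n+1, n+2, \ldots, n+\mu\}$ and let $\bL_E$ denote the $\mu \times N$ submatrix of $\bL$ formed by the rows indexed by $E$. The existence of a vector $\bvi$ with $\bvi \lhd [n]$ and $\bvi + \be_{n+i} \in \cl$ is equivalent to requiring that the projection of $\cl \seq \fq^{n+\mu}$ onto the coordinates in $E$ contain the $i$-th standard unit vector of $\fq^\mu$. Since this projected image coincides with the column space of $\bL_E$, the conclusion of the lemma, taken over all $i \in [\mu]$, is equivalent to $\rank(\bL_E) = \mu$; that is, to the linear independence over $\fq$ of the $\mu$ rows $\bL_{n+1}, \ldots, \bL_{n+\mu}$.

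Suppose, for contradiction, that these rows are linearly dependent. After relabeling the random symbols, we may assume that $\bL_{n+1}$ lies in the row span of $\bL_{n+2}, \ldots, \bL_{n+\mu}$. Let $\bL'$ be the $(n+\mu-1) \times N$ matrix obtained by deleting the $(n+1)$-th row of $\bL$, and consider the linear $(\mu-1)$-randomized encoding $\fkE'(\bx, \bg') = (\bx \mid \bg')\bL'$ with $\bg' \in \fq^{\mu-1}$. I would then verify that $\fkE'$ inherits both key properties of $\fkE$. Decodability is straightforward: every output of $\fkE'$ is also an output of $\fkE$ obtained by setting the first random symbol to $0$, so the original decoders $\fkD_i$ still recover $x_{f(i)}$. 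For $(\mu,t)$-strong security, the key point is that for every $W \seq [N]$ with $|W| \le \mu$ and every fixed $\bx$, the random vectors $(\bx \mid \bG)\bL[W]$ and $(\bx \mid \bG')\bL'[W]$ have the same distribution, namely the uniform distribution on the coset $\bx \bL_{[n]}[W] + \mathrm{rowspan}(\bL_E[W])$: the row span of $\bL_E[W]$ is unaffected by removing one of its rows that lies in the span of the others, and uniform randomness pushed forward by a linear map depends only on the image of that map. It follows that all relevant conditional entropies coincide, so the strong security of $\fkE$ transfers to $\fkE'$.

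Applying Lemma~\ref{lem:randomness} to $\fkE'$, which uses only $\mu - 1$ random symbols yet is $(\mu,t)$-strongly secure, yields $\mu - 1 \ge \mu$, the desired contradiction. The main obstacle is the verification of strong security for the reduced code $\fkE'$; once one is convinced that deleting a row in the span of the other random-symbol rows preserves the distribution of what both the legitimate receivers and the adversary observe, the argument reduces to a single invocation of the previously established Lemma~\ref{lem:randomness}.
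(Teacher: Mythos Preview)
Your proof is correct. Both your argument and the paper's argue by contradiction and ultimately invoke Lemma~\ref{lem:randomness}, but the way the reduction to $\mu-1$ random symbols is carried out is genuinely different.

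The paper works directly with the specific index $i$ for which the condition fails. Using Corollary~\ref{coro1}, it deduces $\en(G_i \mid (\bX|\bG)\bL, \bX) = \en(G_i)$, and then through a mutual-information computation shows that revealing $G_i$ to the adversary does not break $(\mu,t)$-strong security; consequently the row $\bL_{n+i}$ itself can be discarded. You instead reformulate the full conclusion as the rank condition $\rank(\bL_E) = \mu$, find a \emph{redundant} row of $\bL_E$ (which need not correspond to the failing index, hence your relabeling), and argue that deleting it leaves the distribution of $(\bx \mid \bG)\bL[W]$ unchanged for every $W$ because the row span of $\bL_E[W]$ is preserved. Your route is more linear-algebraic and sidesteps the entropy calculation entirely; the paper's route stays closer to the information-theoretic definitions and does not need the rank reformulation or the relabeling step. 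Both are clean, and your distributional argument for the preservation of strong security is arguably more transparent than the paper's conditional-entropy manipulation.
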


\begin{proof}
Assume, by contradiction, that for some $i \in [\mu]$, we have $\bvi + \be_{n + i} \notin \cl$ for all
$\bvi \lhd [n]$.
Consider a virtual receiver, which has a side information set $\{x_j\}_{j \in [n]}$, 
and requests the symbol $g_i$.  
By Corollary~\ref{coro1}, this virtual receiver has no information about $g_i$ after
listening to all transmissions. In other words, we have
\begin{equation} 
\label{E20}
\en(G_i| (\bX | \bG) \bL, \bX) = \en(G_i) \; , 
\end{equation} 
and, in particular, for a smaller set of side information, 
\begin{equation} 
\label{E20-1}
\en(G_i| (\bX | \bG) \bL, \bX_\Xa) = \en(G_i) \; . 
\end{equation} 
We recall Definition~\ref{def:strong-secure}: for every $\mu$-subset $W \subseteq [N]$ and every $t$-subset $\Xa \subseteq [n]$, we have
\begin{equation} 
\label{E21}
\en(\bX_\Xah| (\bX|\bG) \bL[W], \bX_\Xa) = \en(\bX_\Xah) \; .
\end{equation} 
In the sequel we show that if the value of $G_i$ is known to the adversary, 
this randomized index code is still $(\mu,t)$-strongly secure. 
In other words, we aim to show that
\begin{equation} 
\label{E22}
\en(\bX_\Xah| (\bX | \bG) \bL[W], \bX_\Xa, G_i) = \en(\bX_\Xah) \; ,
\end{equation} 
for every $\mu$-subset $W \subseteq [N]$ and every $t$-subset $\Xa \subseteq [n]$. 
Indeed, the left-hand side of (\ref{E22}) is equal to 
\[
\en(\bX_\Xah| (\bX | \bG) \bL[W], \bX_\Xa) - \mi(\bX_\Xah; G_i| (\bX | \bG) \bL[W], \bX_\Xa),
\]
which is 
\[
\en(\bX_\Xah) - \mi(\bX_\Xah; G_i| (\bX | \bG) \bL[W], \bX_\Xa)
\]
due to (\ref{E21}). Hence, it suffices to show that 
\[
\mi(\bX_\Xah; G_i \; | \; (\bX | \bG) \bL[W], \bX_\Xa) = 0 \; . 
\]
We have
\[
\begin{split}
\mi(\bX_\Xah; G_i| (\bX | \bG) & \bL[W], \bX_\Xa)\\ 
&= \; \en(G_i| (\bX | \bG) \bL[W], \bX_\Xa)\\
& \; \quad - \en(G_i|(\bX | \bG) \bL[W], \bX_\Xa, \bX_\Xah)\\
&= \; \en(G_i| (\bX | \bG) \bL[W], \bX_\Xa)\\
& \; \quad - \en(G_i|(\bX | \bG) \bL[W], \bX)\\
&= \; \en(G_i) - \en(G_i)\\
&= \; 0\ \; ,
\end{split}
\]
where the third transition is due to~(\ref{E20}) and~(\ref{E20-1}).

To this end, we have shown that the randomized index code is still $(\mu,t)$-strongly secure if the adversary knows the 
realized value of $G_i$. Equivalently, discarding the random variable $G_i$ from the scheme does not affect its strong security. However, 
this contradicts Lemma~\ref{lem:randomness}, since the resulting code has less than $\mu$ random
symbols.  
\end{proof} 
\vskip 10pt 

The following theorem proves a lower bound on the length of a $(\mu,t)$-strongly secure linear randomized index code. 

\vskip 10pt 
\begin{theorem}
\label{thm:lowerbound}
The length of a $(\mu,t)$-strongly secure linear $\eta$-randomized {\mic} is at least $\kpq + \mu$. 
\end{theorem}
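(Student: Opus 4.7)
The plan is to bound $\dim \cl$ from below by $\kpq + \mu$; since $\cl$ is spanned by the $N$ columns of $\bL$, this would immediately yield $N \geq \dim \cl \geq \kpq + \mu$. Let $E = \{n+1, n+2, \ldots, n+\eta\}$ index the rows of $\bL$ corresponding to the random symbols $\bG$, and let $\pi_E \colon \fq^{n+\eta} \to \fq^{\eta}$ denote the projection onto those coordinates. Setting $\mathcal{C}_0 = \cl \cap \ker \pi_E$, the rank--nullity identity $\dim \cl = \dim \mathcal{C}_0 + \dim \pi_E(\cl)$ reduces the task to proving, separately, that $\dim \mathcal{C}_0 \geq \kpq$ and $\dim \pi_E(\cl) \geq \mu$.

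The first inequality is the easy half. By Corollary~\ref{coro:index_code}, for each $i \in [m]$ there exists $\bui \in \fq^{n+\eta}$ with $\bui \lhd \X_i \subseteq [n]$ and $\bui + \be_{f(i)} \in \cl$. All such vectors lie in $\mathcal{C}_0$ because they vanish on $E$, and their $\fq$-rank in $\fq^{n+\eta}$ coincides with their rank viewed in $\fq^n$, since the $E$-coordinates are zero. The defining minimization~(\ref{equ:E111}) of $\kpq$ then says this rank is at least $\kpq$, whence $\dim \mathcal{C}_0 \geq \kpq$.

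The second inequality, $\dim \pi_E(\cl) = \rank(\bL_E) \geq \mu$, is where I expect the real work. Lemma~\ref{lem:randomness} only delivers the weaker bound $\eta \geq \mu$, which controls the number of rows of $\bL_E$ but not its rank. My plan is to reproduce that argument nearly verbatim, replacing the hypothesis ``$\eta < \mu$'' by the weaker ``$\rank(\bL_E) < \mu$''. The only place where $\eta < \mu$ enters the proof of Lemma~\ref{lem:randomness} is to conclude that any $\mu$ rows of the matrix $\bL'' = \bL_E^T$ are linearly dependent, and this conclusion follows just as well from $\rank(\bL_E) < \mu$. Property~A (extracted from the strong-security hypothesis through Lemma~\ref{lem3} applied to each $\mathcal{C}(\bL[W])$ with $|W| \leq \mu$) therefore still forces $\rank(\bL') \leq \mu - 1$. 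Then, picking any receiver $R_i$ together with an $\X_A$ of size $t$ avoiding $f(i)$, the IC codeword $\bc = \bui + \be_{f(i)} \in \cl$ satisfies $\bc_E = \bO$ and $\bc_{\Xah} \neq \bO$, so expressing $\bc_{\Xah \cup E}$ in a basis of the row span of $\bL'$ (of size $\vr < \mu$) contradicts Property~A applied to that $\vr$-subset. Combining the two bounds yields $N \geq \dim \cl \geq \kpq + \mu$, as desired.
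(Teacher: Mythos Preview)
Your proof is correct and is arguably cleaner than the paper's. The paper proceeds by a case distinction on $\eta$: when $\eta = \mu$ it invokes the auxiliary Lemma~\ref{lem:randomness_recovery} (which produces vectors $\bvi + \be_{n+i} \in \cl$ for every $i \in [\mu]$) and then exhibits an explicit family of $\kpq + \mu$ linearly independent vectors in $\cl$; when $\eta > \mu$ it either repeats that count or iteratively discards random symbols $G_i$ that fail the hypothesis of Lemma~\ref{lem:randomness_recovery}, eventually reducing to the first case. Your rank--nullity decomposition $\dim\cl = \dim\mathcal{C}_0 + \rank(\bL_E)$ sidesteps this entirely. The crucial observation---which you identify correctly---is that the proof of Lemma~\ref{lem:randomness} actually establishes $\rank(\bL_E) \geq \mu$, not merely $\eta \geq \mu$: the only place $\eta < \mu$ is used there is to force any $\mu$ rows of $(\bL_E)^T$ to be linearly dependent, and $\rank(\bL_E) < \mu$ does that equally well. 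What your route buys is uniformity over $\eta$ and the elimination of Lemma~\ref{lem:randomness_recovery} and the iterative reduction; what the paper's route buys is an explicit basis-style witness for the dimension bound, which some readers may find more concrete.

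One minor presentational point: since $\bL' = (\bL_{\Xah \cup E})^T$ depends on $\X_A$, it would read more cleanly to fix the pair $(R_i, \X_A)$ (any receiver and any $t$-subset with $f(i) \notin \X_A$) \emph{before} introducing $\bL'$ and deriving $\rank(\bL') \leq \mu - 1$, rather than after.
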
 

\begin{proof}
Suppose the linear randomized index code is based on $\bL$. 
We divide the proof into several cases. 
\begin{description}
\item[Case 1:] 
$\;\; \eta = \mu$. Then, by Corollary~\ref{coro:index_code} and Lemma \ref{lem:randomness_recovery}, the subspace $\cl$ must contain:  
\begin{itemize}
\item
the vectors $\bui + \be_{f(i)}$ for some $\bui \lhd \X_i$, for all $i \in [m]$;
\item
the vectors $\bvi + \be_{n + i}$, for some $\bvi \lhd [n]$, for all $i \in [\mu]$. 
\end{itemize}
Due to linear independence of these vectors and to the definition of $\kpq$, the length of the code is at least
\[
\begin{split}
\dim(\cl) & \geq \rank(\{\bui + \be_{f(i)}\}_{i \in [m]})\\
& \quad + \rank(\{\bvi + \be_{n + i}\}_{i \in [\mu]})\\
& \geq \kpq + \mu \; . 
\end{split}
\] 

\item[Case 2:] $\;\; \eta > \mu$, and for all $i \in [\eta]$ there exists some vector $\bvi \lhd [n]$ such that $\bu^{(i)} + \be_{n + i} \in \cl$.

In this case, similarly to Case 1, we have
	\[
	\dim(\cl) \geq \kpq + \eta > \kpq + \mu.
	\]
	Therefore, $\bL$ has at least $\kpq + \mu$ columns.
	 
\item[Case 3:] $\;\; \eta > \mu$, and for some $i \in [\eta]$, $\bvi + \be_{n + i} \notin \cl$ for all $\bvi \lhd [n]$. By following exactly the same argument as in the proof of Lemma~\ref{lem:randomness_recovery}, we deduce that discarding $G_i$ does not affect the strong security of the randomized index code. By doing so, we obtain a new randomized $(\mu,t)$-strongly secure index code, which has $\eta - 1$ random variables. This code is based on $\bL'$, which is obtained from $\bL$ by deleting its $(n+i)$-th row. 

The above argument can be applied until either the number of random variables decreses to $\mu$, 
or the code in consideration satisfies the condition of Case 2. 
In both cases, the resulting randomized index code has length at least $\kpq + \mu$. As the length of the code do not change during the process, we conclude that the length of the original code is at least $\kpq + \mu$.    
\end{description}
\end{proof} 
\vskip 10pt 

The next theorem establishes a lower bound on the length of a $(\mu,t,\delta)$-strongly secure linear randomized index code. 

\vskip 10pt
\begin{theorem}
\label{thm:lowerbound2}
The length of a $(\mu,t, \delta)$-strongly secure linear {\rmic} is at least $\kpq + \mu + 2\delta$.
\end{theorem}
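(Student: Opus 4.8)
The plan is to combine the lower bound $\kpq + \mu$ from Theorem~\ref{thm:lowerbound} with a ``$+2\delta$'' contribution coming from the error-correction requirement, using the characterization of error-correcting codes in Lemma~\ref{lem:decodability}. The essential idea is that strong security forces $\cl$ to contain certain vectors (the $\bui + \be_{f(i)}$ and the $\bvi + \be_{n+i}$), while $\delta$-error-correction forces every such $\cl$-vector that lies in $\I(q,m,n,\X,f)$ (after extending the instance to treat $g_1,\dots,g_\mu$ as messages requested by virtual receivers) to have large weight in the transmitted coordinates; a code of length exactly $\kpq + \mu$ cannot afford both. More precisely, I would argue that if $\bL$ had fewer than $\kpq + \mu + 2\delta$ columns, one could puncture $2\delta$ coordinates and still obtain a $(\mu,t)$-strongly secure (non-error-correcting) linear randomized index code of length less than $\kpq + \mu$, contradicting Theorem~\ref{thm:lowerbound}.

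First I would reduce to the case $\eta = \mu$: by exactly the argument used in Case~3 of the proof of Theorem~\ref{thm:lowerbound} (discarding any random variable $G_i$ for which no $\bvi \lhd [n]$ has $\bvi + \be_{n+i} \in \cl$), we may assume every random symbol is ``used'' and $\eta = \mu$, since discarding such a $G_i$ preserves both strong security and the $\delta$-error-correcting property (the latter is immediate, as deleting a row of $\bL$ that is not needed for any receiver's decoding cannot hurt, and one must check it does not decrease the relevant minimum weights — this is where a small lemma is needed). Next, I would view the scheme as an ordinary $\delta$-error-correcting linear index code over the extended message set $\{x_1,\dots,x_n,g_1,\dots,g_\mu\}$, where for each $i \in [\mu]$ we add a virtual receiver $R'_i$ with side information $\{x_j\}_{j\in[n]}$ requesting $g_i$; by Lemma~\ref{lem:randomness_recovery} each such $R'_i$ can indeed decode, so $\bL$ corresponds to a $\delta$-error-correcting index code for this extended instance of length $N$.

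The heart of the argument is then a puncturing step. Let $K$ be the (at most $\kpq + \mu$)-dimensional span of the vectors $\{\bui+\be_{f(i)}\}_{i\in[m]} \cup \{\bvi+\be_{n+i}\}_{i\in[\mu]}$ guaranteed to lie in $\cl$. Take a generator-type submatrix $\bL_0$ of $\bL$ whose column space is $K$ and of dimension $\dim K$; actually it is cleaner to work directly with $\bL$ and show $\dim \cl \ge \kpq + \mu + 2\delta$. I would pick a nonzero $\bz \in \IX$ (for the extended instance — e.g. corresponding to a single virtual receiver $R'_i$, so $\bz = \be_{n+i} - \bvi$ for a suitable choice, or any $\bz$ with $\bz_{[n]} = \bO$, $z_{n+i}\neq 0$, which certainly exists); by Lemma~\ref{lem:decodability} the codeword $\bz\bL \in \cl$ has weight $\ge 2\delta+1$. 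Deleting any $2\delta$ of the nonzero coordinates of $\bz\bL$ yields a punctured matrix $\bL'$ with $N - 2\delta$ columns for which $\bz\bL'$ is still nonzero, hence $\bL'$ still corresponds to a $(\mu,t)$-strongly secure index code for the extended instance in the error-free sense (strong security can only improve under puncturing, by the Remark after Theorem~\ref{mainthm2} and Lemma~\ref{lem3}: a smaller set of observed transmissions gives the adversary less information, and the required recovery by each $R_i$ and each $R'_i$ — we must ensure these are not destroyed, which is the subtle point: puncturing a coordinate used in $R_i$'s decoding could break decodability). The resolution is to choose the $2\delta$ punctured coordinates from within $\supp(\bz\bL)$ while checking that the error-correcting margin $\ge 2\delta+1$ of every relevant codeword guarantees that, after puncturing, every $\bz'\bL$ with $\bz' \in \IX$ still has weight $\ge 1$; this is exactly Corollary~\ref{coro:ic_decodability}, and it follows from $\weight(\bz'\bL) \ge 2\delta+1$ for all $\bz'\in\IX$ together with the fact that puncturing $2\delta$ coordinates drops each weight by at most $2\delta$. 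Thus $\bL'$ is a $(\mu,t)$-strongly secure linear $\mu$-randomized index code of length $N - 2\delta$, so by Theorem~\ref{thm:lowerbound}, $N - 2\delta \ge \kpq + \mu$, i.e. $N \ge \kpq + \mu + 2\delta$, as claimed.

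The main obstacle is the last point: verifying carefully that the puncturing simultaneously (a) preserves strong security, (b) preserves decodability for all real receivers $R_i$ and virtual receivers $R'_i$, and (c) reduces the length by exactly $2\delta$. Point (a) is easy (monotonicity of the adversary's information in the set of observed transmissions). Point (c) requires that $\weight(\bz\bL)\ge 2\delta+1$, so $2\delta$ coordinates can indeed be removed from its support. Point (b) is the delicate one and is handled by Corollary~\ref{coro:ic_decodability}: since $\weight(\bz'\bL) \ge 2\delta+1 > 2\delta$ for every $\bz' \in \IX$, removing any $2\delta$ coordinates leaves $\weight(\bz'\bL') \ge 1$, which is precisely the decodability condition for the error-free index code based on $\bL'$. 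One also needs the reduction to $\eta=\mu$ to be compatible with error-correction, which follows because the discarding step in Case~3 deletes a row of $\bL$ that is irrelevant to every $\bz \in \IX$'s support structure in a way that does not decrease $\weight(\bz\bL)$.
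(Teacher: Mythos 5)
Your core argument is exactly the paper's: puncture $2\delta$ columns of $\bL$; Lemma~\ref{lem:decodability} guarantees all decodability weights are $\ge 2\delta + 1$, so after puncturing they remain $\ge 1$, hence by Corollary~\ref{coro:ic_decodability} the punctured matrix is still a valid randomized index code; strong security is trivially preserved under puncturing; Theorem~\ref{thm:lowerbound} then gives $N - 2\delta \ge \kpq + \mu$.

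However, your route to that argument contains unnecessary detours and one unjustified step. The reduction to $\eta = \mu$ is not needed, since Theorem~\ref{thm:lowerbound} already applies to arbitrary $\eta$. The virtual receivers $R'_i$ play no role in the puncturing argument, which only requires the \emph{real} receivers to remain decodable. More importantly, your claim that ``$\bL$ corresponds to a $\delta$-error-correcting index code for this extended instance'' with the virtual receivers is not justified: Lemma~\ref{lem:randomness_recovery} gives only error-free decodability for the $R'_i$'s, and nothing in the hypotheses forces them to correct $\delta$ errors, so you cannot invoke Lemma~\ref{lem:decodability} for a $\bz$ associated with a virtual receiver. Finally, insisting that the punctured coordinates lie in $\supp(\bz\bL)$ for one chosen $\bz$ is an unnecessary constraint; the paper simply deletes \emph{any} $2\delta$ columns, which works directly because every weight $\weight(\bz'\bL)$ with $\bz'\in\IX$ exceeds $2\delta$. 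None of these issues is fatal — you converge on the correct argument in the end — but the paper's proof is considerably leaner: take $\bL'$ to be $\bL$ with any $2\delta$ columns removed, apply Lemma~\ref{lem:decodability} and Corollary~\ref{coro:ic_decodability}, note strong security persists, and invoke Theorem~\ref{thm:lowerbound}.
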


\begin{proof}
Let $\bL$ correspond to a $(\mu,t, \delta)$-strongly secure {\rmic}.
Let $\bL'$ be the matrix obtained by deleting any $2\delta$ columns of $\bL$. 
Since $\bL$ corresponds to a $\delta$-error-correcting index code, 
by Lemma~\ref{lem:decodability} it satisfies
\[
\weight \left( \sum_{i \in K} z_i \bL_i \right) \geq 2\delta + 1 \; ,
\]
for all $K \in \JX$ and all choices of nonzero $z_i \in \fq$, $i \in K$. We 
obtain that the rows of $\bL'$ satisfy
\[
\weight \left( \sum_{i \in K} z_i \bL'_i \right) \geq 1 \; .
\] 
By Corollary~\ref{coro:ic_decodability}, $\bL'$ corresponds to an {\rmic}. 
Since all entries of $\bL'$ are contained in $\bL$, 
we deduce that $\bL'$ corresponds to a $(\mu,t)$-strongly secure {\rmic}.
Therefore, by Theorem~\ref {thm:lowerbound}, $\bL'$ has at least $\kpq + \mu$ columns. 
Therefore, $\bL$ has at least $\kpq + \mu + 2\delta$ columns.  
\end{proof} 
\vskip 10pt 

\subsection{A Construction of Optimal Strongly Secure Index Codes}
\label{sec:optimal_code}
In this section, we present a construction of an optimal $(\mu, t, \delta)$-strongly secure $\mu$-randomized
linear $(m,n,\X,f)$-IC over $\fq$, which has length attaining the lower bound established in Theorem~\ref{thm:lowerbound2}.
It requires $q$ to be at least $\kpq + \mu + 2\delta + 1$. 
The proposed construction is based on the coset coding technique, originally introduced by Ozarow and Wyner~\cite{OzarowWyner1984}. This technique has been adopted in a variety of network coding applications, such as~\cite{CaiYeung2002, Feldman2004, Rouayheb_Soljanin2007, Zhuang2010, Silva_Kschischang2010}.

{\bf Construction A:} Let $\bLo$ correspond to a linear {\mic} of optimal length $\kpq$. Let $\bM$ be a generator matrix
of an $[N = \kpq + \mu + 2\delta, \kpq + \mu, 2 \delta + 1]_q$ MDS code, so that the last $\mu$ rows of $\bM$ form a generator matrix
of another MDS code. For instance, take
\[
\bM = 
\left( 
\begin{array}{cccc}
1 & 1 & \cdots & 1 \\
\al_1 & \al_2 & \cdots & \al_N \\
\vdots & \vdots & \ddots & \vdots \\
\vspace{1ex}
\al_1^{{\kpq} - 1} & \al_2^{{\kpq} - 1} & \cdots & \al_N^{{\kpq} - 1} \\
\hline 
\vspace{-2ex} &&& \\
\al_1^{\kpq} & \al_2^{\kpq} & \cdots & \al_N^{\kpq} \\
\al_1^{{\kpq} + 1} & \al_2^{{\kpq} + 1} & \cdots & \al_N^{{\kpq} + 1} \\
\vdots & \vdots & \ddots & \vdots \\
\al_1^{\kpq + \mu - 1} & \al_2^{\kpq + \mu - 1} & \cdots & \al_N^{\kpq + \mu - 1} \\
\end{array} \right) ,
\]  
where $\al_1,\al_2,\ldots,\al_N$ are pairwise distinct nonzero elements in $\fq$. 
Let $\bP$ be the submatrix of $\bM$ formed by the first $\kpq$ rows, and $\bQ$
the submatrix formed by the last $\mu$ rows of $\bM$. 
Take 
\[
\bL=\left(
\begin{array}{c}
\bLo \bP \\ \hline
\bQ \\ 
\end{array}\right).
\]

\vskip 10pt 
\begin{lemma}
\label{lem:error_correction}
The matrix $\bL$ in Construction A corresponds to a $\delta$-error-correcting $\mu$-randomized $(m,n,\X,f)$-IC over $\fq$. 
\end{lemma}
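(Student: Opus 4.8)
The plan is to verify that $\bL$ from Construction~A satisfies the decodability criterion of Lemma~\ref{lem:decodability}, namely that $\weight(\bz\bL) \ge 2\delta+1$ for every $\bz \in \IX$. First I would observe how $\bz\bL$ decomposes: writing $\bz = (\bz' \mid \bO) \in \fq^{n+\mu}$ for some $\bz' \in \fqn$ (since vectors in $\IX$ are supported on $[n]$ by definition), we get
\[
\bz\bL = \bz'\,\bLo\,\bP = (\bz'\,\bLo)\,\bP.
\]
Set $\bw \define \bz'\,\bLo \in \fq^{\kpq}$. The key point is that $\bP$ is the submatrix consisting of the first $\kpq$ rows of the generator matrix $\bM$ of an $[N,\kpq+\mu,2\delta+1]_q$ MDS code. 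Hence $\bP$ itself is a generator matrix of a code that is a subcode (the ``top'' part) — more precisely, for any nonzero $\bw$, the vector $\bw\bP$ is a codeword of this MDS code, and therefore has weight at least $2\delta+1$, \emph{provided} $\bw \ne \bO$.

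So the crux reduces to showing $\bw = \bz'\bLo \ne \bO$ whenever $\bz' \in \IX$. This is exactly where Corollary~\ref{coro:ic_decodability} (equivalently Lemma~\ref{lem:decodability} with $\delta = 0$) applied to $\bLo$ comes in: since $\bLo$ corresponds to a (plain, error-free) linear {\mic}, we have $\weight(\bz'\bLo) \ge 1$ for all $\bz' \in \IX$, i.e. $\bz'\bLo \ne \bO$. Combining the two facts, $\weight(\bz\bL) = \weight(\bw\bP) \ge 2\delta+1$ for all $\bz \in \IX$, and Lemma~\ref{lem:decodability} then immediately yields that $\bL$ corresponds to a $\delta$-error-correcting $\mu$-randomized $(m,n,\X,f)$-IC over $\fq$. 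I should also double-check the two degenerate edge cases: that $\bLo$ having exactly $\kpq$ columns is compatible with $\bP$ having exactly $\kpq$ rows (it is, by the dimensions in Construction~A), and that the MDS code with the stated parameters exists, which requires $q \ge N = \kpq + \mu + 2\delta$ — this is guaranteed by the hypothesis $q \ge \kpq + \mu + 2\delta + 1$ stated at the start of the construction, via the generalized Reed–Solomon / Vandermonde construction exhibited by the explicit matrix $\bM$.

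The main obstacle — really the only subtle point — is being careful that ``weight of $\bw\bP$'' is the weight over the full length-$N$ code, so that the MDS minimum-distance bound applies: one needs that $\bw\bP$ is genuinely a codeword of the $[N,\kpq+\mu]_q$ MDS code generated by $\bM$, obtained by appending zero coefficients for the last $\mu$ rows of $\bM$. Since $(\bw \mid \bO)\bM = \bw\bP$ and $(\bw\mid\bO) \ne \bO$ when $\bw \ne \bO$, this is clear, but it is worth spelling out so the reader sees why only the \emph{top} block $\bP$ (and not $\bQ$) is needed for decodability, while $\bQ$'s MDS property is reserved for the security argument elsewhere. Everything else is routine bookkeeping with the block structure of $\bL$.
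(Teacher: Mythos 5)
Your proof takes a genuinely different route from the paper's: you verify the algebraic decodability criterion of Lemma~\ref{lem:decodability}, whereas the paper gives a direct two-stage decoding argument (observe that $\fkE(\bx,\bg) = (\bx\bLo \,|\, \bg)\bM$, so that each receiver first decodes the $[N,\kpq+\mu,2\delta+1]_q$ MDS code generated by $\bM$ to recover $(\bx\bLo \,|\, \bg)$, and then applies the index decoder for $\bLo$ to recover $x_{f(i)}$). Your approach is aesthetically nice because it stays entirely within the combinatorial framework set up earlier in Section~\ref{subsec:error-correcting}, while the paper's approach is arguably more transparent and exposes the coset-coding structure.

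However, there is a genuine gap in your argument. You write $\bz = (\bz' \mid \bO)$ and justify this by saying ``vectors in $\IX$ are supported on $[n]$ by definition.'' That is true of $\I(q,m,n,\X,f)$ as defined, but that is not the relevant set here. The matrix $\bL$ in Construction~A has $n + \mu$ rows, so to apply Lemma~\ref{lem:decodability} you must (as the paper notes) treat $g_1,\dots,g_\mu$ as additional messages and work with the instance $(m, n+\mu, \X, f)$. In the corresponding set, the last $\mu$ coordinates of $\bz$ are unconstrained, since none of the indices $n+1,\dots,n+\mu$ lie in any $\X_i$ or equal any $f(i)$. So you must allow $\bz = (\bz' \mid \bz'')$ with $\bz'' \in \fq^\mu$ arbitrary and $\bz' \in \IX$, which gives $\bz\bL = \bz'\bLo\bP + \bz''\bQ = (\bz'\bLo \mid \bz'')\,\bM$. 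The fix is short: since $\bz'\bLo \neq \bO$ by Corollary~\ref{coro:ic_decodability}, the message vector $(\bz'\bLo \mid \bz'')$ is nonzero regardless of $\bz''$, so its image under $\bM$ has weight at least $2\delta+1$. But as written, by silently forcing $\bz'' = \bO$, you only cover a proper subset of the cases the criterion requires.
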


\begin{proof}
Recall that $\bg \in \fq^\mu$ is a random vector. The encoding function $\fkE$ has a form
\[
\fkE(\bx,\bg) = (\bx|\bg) \bL = \bx \bLo \bP + \bg \bQ = (\bx \bLo|\bg) \bM \; . 
\] 
Since $\bM$ is a generator matrix of a $\delta$-error-correcting code,  
each receiver $R_i$, $i \in [m]$, is able to recover $(\bx \bLo|\bg)$
if the number of errors in $\fkE(\bx,\bg)$ is less than or equal to $\delta$. 
Therefore, each receiver $R_i$ can recover $\bx \bLo$, and hence, it can also 
recover $x_{f(i)}$, $i \in [m]$, as $\bLo$ corresponds to a linear {\mic}. 
\end{proof} 
\vskip 10pt 

\begin{lemma}
\label{lem:strong_security}
The matrix $\bL$ in Construction A corresponds to a $(\mu,t)$-strongly secure $\mu$-randomized $(m,n,\X,f)$-IC over $\fq$. 
\end{lemma}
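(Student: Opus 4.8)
The plan is to verify Definition~\ref{def:strong-secure}, part~2, for the matrix $\bL$ from Construction~A, i.e.\ to show that for every $W \subseteq [N]$ with $|W| \le \mu$ we have $\en(\bX_{\Xah} \mid (\bX|\bG)\bL[W], \bX_{\X_A}) = \en(\bX_{\Xah})$. By Lemma~\ref{lem3} (applied with the $n+\mu$ symbols $x_1,\dots,x_n,g_1,\dots,g_\mu$ playing the role of the messages, and with $\C(\bL[W])$ in place of $\cl$), it suffices to show that for every such $W$, no nonzero vector of the form $\bu + \sum_{i \in B}\al_i \be_i$, with $\bu \lhd \X_A$ and $B \subseteq \Xah$, lies in $\C(\bL[W])$ — equivalently, that $\C(\bL[W])$ contains no nonzero codeword whose support avoids $\X_A$ entirely within the ``message'' block $[n]$ while being supported on the random block $E=\{n+1,\dots,n+\mu\}$ in a controlled way. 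Concretely, I will show: any $\bc \in \C(\bL[W])$ with $\bc_{\Xah}\neq\bO$ must have $\bc_E\neq\bO$ (this is ``Property~A'' from the proof of Lemma~\ref{lem:randomness}), which rules out the dangerous vectors since $\bu+\sum_{i\in B}\al_i\be_i$ has zero $E$-component but nonzero $\Xah$-component.

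The key computation exploits the block structure $\bL = \left(\begin{smallmatrix}\bLo\bP\\\bQ\end{smallmatrix}\right)$ and the MDS property of $\bM=\left(\begin{smallmatrix}\bP\\\bQ\end{smallmatrix}\right)$. A codeword of $\C(\bL[W])$ is $(\bx|\bg)\bL[W] = (\bx\bLo|\bg)\bM[W]$ for some $\bx\in\fqn,\bg\in\fq^\mu$. Suppose such a codeword has $\bg$-portion (the $E$-component) equal to $\bO$; I must show its $[n]$-portion restricted to $\Xah$ is also $\bO$. The row $\bu^{(i)}\bLo\bP+\be_{n+i}\bQ$ argument won't be what I use directly; instead, set $\bv=(\bx\bLo|\bg)\in\fq^{\kpq+\mu}$. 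Then the full $\bL$-codeword is $(\bx\bLo\bP|\bg\bQ)=\bv\bM$ — wait, that's the length-$N$ word; after restricting to $W$ it is $\bv\bM[W]$. The claim ``$\bc_E=\bO$'' concerns the last $\mu$ coordinates of $(\bx|\bg)\bL=(\bx|\bg)\left(\begin{smallmatrix}\bLo\bP\\\bQ\end{smallmatrix}\right)$; but the bottom block $\bQ$ has $\mu$ rows and $N$ columns, so the $E$-component of the codeword is the $\bg$-contribution is not separated out this way. Let me restate: $(\bx|\bg)\bL$ is a length-$N$ vector, and ``$\bc_E$'' in Lemma~\ref{lem3}'s application refers to coordinates $n+1,\dots,n+\mu$ of the \emph{message} vector $(\bx|\bg)$, recovered via $\be_{n+i}$. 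So the real statement from Lemma~\ref{lem3} is: $(\bx|\bg)\bL[W]$ cannot equal $\bu\bL[W]+\sum\al_i\be_i\cdot(\text{nothing})$ — the point is simpler: a forbidden codeword in $\C(\bL[W])$ is one expressible as $(\bu + \sum_{i\in B}\al_i\be_i)$ for $\bu\lhd\X_A$, and this requires solving $(\bz)\bL[W]=\bu+\sum\al_i\be_i$ with $\bz\in\fq^{n+\mu}$ — but $\bu+\sum\al_i\be_i\in\fq^{n+\mu}$ lives in the message space, whereas $(\bz)\bL[W]$ lives in $\fq^{|W|}$; this can't be an equation. So I am conflating the two applications of Lemma~\ref{lem3}. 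The correct reading: Lemma~\ref{lem3}'s condition~(\ref{E3}) is about whether $\bu+\sum_{i\in B}\al_i\be_i \in \C(\bL[W])$ — i.e.\ whether it is a \emph{linear combination of (transposed) columns} of $\bL[W]$, which are vectors in $\fq^{n+\mu}$. Good — so I need: no nonzero $\bu+\sum_{i\in B}\al_i\be_i$ (with $\bu$ supported on $\X_A$, $B\subseteq\Xah$, $B$ within the message block $[n]$) is in the column space $\C(\bL[W])$ for any $|W|\le\mu$.

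**The main steps**, then: (1)~Establish Property~A: if $\bc\in\C(\bL[W])$ and $\bc_{[n]}$ (restricted to $\Xah$) is nonzero, then $\bc_E\neq\bO$. For this, write $\bc=\sum_{j\in W}\beta_j(\bL[j])^T$ and note $(\bL[j])^T$ has message-block part $(\bLo\bP)[j]^T$ and random-block part $\bQ[j]^T$; if $\sum\beta_j\bQ[j]^T=\bO$ then since $\bQ$ generates an MDS code of dimension $\mu$ and length $N$, any $\le\mu$ of its columns are linearly independent, forcing all $\beta_j=0$ (as $|W|\le\mu$), hence $\bc=\bO$. (2)~Conclude: a vector $\bu+\sum_{i\in B}\al_i\be_i$ with $\bu\lhd\X_A$ and $\varnothing\neq B\subseteq\Xah\cap[n]$ has $E$-component zero but $\Xah$-component nonzero (the $\al_i$'s are not all zero), so by Property~A it is \emph{not} in $\C(\bL[W])$. (3)~Also handle $B$ possibly meeting $E$: if $B\subseteq\Xah$ but $B\cap E\neq\varnothing$, then $\bu+\sum_{i\in B}\al_i\be_i$ may have nonzero $E$-component — here I instead argue directly that $\rank(\bL[W]_E)=\rank(\bQ[W])=|W|\le\mu$, wait, I need it to be full so that the $E$-coordinates of column-combinations are unconstrained; actually the cleanest route is to appeal to the structure used in Lemma~\ref{lem:randomness_recovery}/\ref{lem:randomness}, treating $g_i$ symbols as recoverable, which is consistent with $\bQ$ being MDS. (4)~Invoke Lemma~\ref{lem3} to translate ``no such vector in $\C(\bL[W])$'' into ``no information'', for every $W$, which is exactly $(\mu,t)$-strong security.

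\textbf{The main obstacle} I anticipate is getting the bookkeeping of the two blocks right and ensuring the MDS property of the \emph{bottom} submatrix $\bQ$ (not just $\bM$) is genuinely used — this is why Construction~A insists the last $\mu$ rows of $\bM$ form a generator matrix of another MDS code. The Vandermonde choice makes $\bQ=(\al_j^{\kpq+\ell})_{\ell,j}$, which is $\mathrm{diag}(\al_j^{\kpq})$ times a Vandermonde matrix, hence still MDS; any $\mu$ of its columns are independent. I'd want to state this MDS-ness of $\bQ$ as a small sub-claim before running the Property~A argument. A secondary subtlety: Lemma~\ref{lem3} as stated assumes $\X_A, B, E$ partition $[n]$; here the ground set is $[n+\mu]$ and one must check the lemma's proof goes through verbatim with $[n]$ replaced by $[n+\mu]$ and $\cl$ by $\C(\bL[W])$ — the excerpt already licenses this ("the results from previous sections still apply to linear randomized index codes"), so I will cite that.
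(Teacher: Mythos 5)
Your proposal is correct, but it takes a genuinely different route from the paper's proof. The paper proves the lemma by a direct probability calculation: it shows $\text{Pr}(\bX_\Xah = \ba \mid (\bX|\bG)\bL[W] = \bb, \bX_\Xa = \bx_\Xa) = q^{-(n-t)}$ by observing that, given any realization of $\bX$, the system $\bG\bQ[W] = \bb - \bX\bLo\bP[W]$ has exactly one solution for $\bG$ because $\bQ[W]$ is invertible (here the MDS property of $\bQ$ is used for $|W|=\mu$), and $\bG$ is uniform. Your approach instead recycles Lemma~\ref{lem3}: you apply it with ground set $[n+\mu]$, with $\cl$ replaced by $\C(\bL[W])$, and you verify condition~(\ref{E3}) by showing that every nonzero codeword of $\C(\bL[W])$ must have a nonzero $E$-component (your ``Property~A''), which in turn follows from linear independence of any $\le\mu$ columns of $\bQ$ — the same MDS fact in a slightly different guise. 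Your route is arguably cleaner in that it derives strong security as a corollary of the block-security machinery already developed, and it handles $|W|<\mu$ without a separate monotonicity remark; the paper's computation is more self-contained and mirrors the coset-coding arguments familiar from secure network coding. Two small points: (i) your step~(3) about ``$B$ possibly meeting $E$'' is vacuous — in Definition~\ref{def:strong-secure} one has $\Xah = [n]\setminus\X_A$, so $B=\Xah\subseteq[n]$ never intersects $E=\{n+1,\dots,n+\mu\}$, and the step can be dropped; (ii) the version of Property~A you actually prove (any nonzero $\bc\in\C(\bL[W])$ has $\bc_E\ne\bO$) is strictly stronger than the one you state, and it is the stronger version that your step~(2) uses — state it that way.
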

\begin{proof} 
Suppose that the adversary $A$ possess a message vector $\bx_\Xa$, $|\bx_\Xa| = t$.
Additionally, $A$ can eavesdrop $\mu$ transmissions, i.e. it has a knowledge of
$\bb \define (\bx|\bg) \bL[W]$, for some $W \subseteq [N]$, $|W| = \mu$. Below, we show that
the entropy of $\bX_\Xah$ is not changed given the knowledge of $(\bX|\bG)\bL[W]$ and of $\bx_\Xa$. 
It suffices to show that for all $\ba \in \fq^{n-t}$:
\begin{equation} 
\label{E23}
\text{Pr}(\bX_\Xah = \ba \; | \; (\bX|\bG)\bL[W] = \bb, \; \bX_\Xa = \bx_\Xa) = \dfrac{1}{q^{n - t}} \; .
\end{equation} 
The left-hand side of (\ref{E23}) can be re-written as
\begin{equation} 
\label{E24}
\dfrac{\text{Pr}(\bX_\Xah = \ba, (\bX|\bG)\bL[W] = \bb \; | \; \bX_\Xa = \bx_\Xa)}{\text{Pr}((\bX|\bG)\bL[W] = \bb 
\; | \; \bX_\Xa = \bx_\Xa)}. 
\end{equation} 
The numerator in~(\ref{E24}) is given by
\begin{equation} 
\begin{split}
&\text{Pr}(\bX_\Xah = \ba, (\bX|\bG)\bL[W] = \bb \; | \; \bX_\Xa = \bx_\Xa) \\
& = \; \text{Pr}(\bX_\Xah = \ba \;|\; \bX_\Xa = \bx_\Xa)\\
& \qquad \times \text{Pr}((\bX|\bG)\bL[W] = \bb \;|\; \bx_\Xah = \ba, \bX_\Xa = \bx_\Xa)\\
& = \; \dfrac{1}{q^{n - t}} \text{Pr}((\bX|\bG)\bL[W] = \bb \;|\; \bX_\Xah = \ba, \bX_\Xa = \bx_\Xa)\\
& = \; \dfrac{1}{q^{n - t}} \dfrac{1}{q^\mu} \; = \; \dfrac{1}{q^{n - t + \mu}} .
\end{split} 
\label{E26}
\end{equation} 
The penultimate transition can be explained as follows. We have
\begin{equation} 
\label{E25}
\bb = (\bX|\bG)\bL[W] = \bX \bLo \bP[W] + \bG \bQ[W] \; . 
\end{equation} 
The matrix $\bQ[W]$ is invertible due to the fact that $\bQ$ is a generator matrix of 
an $[N,\mu]$-MDS code. Since $\bX$ is known, the system~(\ref{E25}) has a unique solution given by
\[
\bG = (\bb -  \bX \bLo \bP[W]) (\bQ[W])^{-1} \; . 
\]
Since $\bG$ is uniformly distributed over $\fq^\mu$, 
\[
\begin{split} 
&\text{Pr}((\bX|\bG)\bL[W] = \bb \; | \; \bX_\Xah = \ba, \bX_\Xa = \bx_\Xa)\\
& = \; \text{Pr}(\bG = (\bb -  \bX \bLo \bP[W]) (\bQ[W])^{-1})\\
& = \; \dfrac{1}{q^\mu} \; .
\end{split}
\]
Similarly to~(\ref{E26}), the denominator in~(\ref{E24}) is 
\begin{equation} 
\label{E27}
\begin{split} 
& \text{Pr}((\bX|\bG)\bL[W] = \bb \; | \;  \bX_\Xa = \bx_\Xa)\\
& = \; \sum_{\bc \in \fq^{n - t}} \text{Pr}(\bX_\Xah = \bc \; | \; \bX_\Xa = \bx_\Xa)\\
& \qquad \times \text{Pr}((\bX|\bG)\bL[W] = \bb \; | \; \bX_\Xah = \bc, \bX_\Xa = \bx_\Xa)\\
& = \; q^{n - t} \dfrac{1}{q^{n - t}} \dfrac{1}{q^\mu} \; = \; \dfrac{1}{q^\mu} \; .
\end{split} 
\end{equation} 
From (\ref{E24}), (\ref{E26}), and (\ref{E27}), we obtain (\ref{E23}), as claimed.  
\end{proof} 
\vskip 10pt 

From Theorem~\ref{thm:lowerbound2}, Lemma~\ref{lem:error_correction}, and Lemma~\ref{lem:strong_security}, we obtain the 
main result of this section. 

\vskip 10pt 
\begin{theorem} 
The length of an optimal $(\mu,t,\delta)$-strongly secure linear $\eta$-randomized $(m,n,\X,f)$-IC over $\fq$
($q \geq \kpq + \mu + 2\delta + 1$) is $\kpq + \mu + 2\delta$. Moreover, the code in Construction A achieves this
optimal length. 
\end{theorem}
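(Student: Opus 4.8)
The plan is to combine the lower bound already established in Theorem~\ref{thm:lowerbound2} with the achievability provided by Construction~A, whose properties are isolated in Lemma~\ref{lem:error_correction} and Lemma~\ref{lem:strong_security}. By Theorem~\ref{thm:lowerbound2}, \emph{every} $(\mu,t,\delta)$-strongly secure linear $\eta$-randomized $(m,n,\X,f)$-IC over $\fq$ has length at least $\kpq+\mu+2\delta$, with no restriction on $\eta$ or on the field size. Hence it suffices to exhibit one strongly secure code of length exactly $\kpq+\mu+2\delta$, and to check that Construction~A delivers it under the hypothesis $q \ge \kpq+\mu+2\delta+1$.

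First I would verify that Construction~A is well defined when $q \ge N+1$, where $N = \kpq+\mu+2\delta$. This bound guarantees the existence of $N$ pairwise distinct nonzero elements $\al_1,\ldots,\al_N \in \fq$, so the displayed matrix $\bM$ makes sense. Since the rows of $\bM$ are $(\al_j^{0},\al_j^{1},\ldots,\al_j^{\kpq+\mu-1})$ read columnwise, $\bM$ is a (generalized) Reed--Solomon generator matrix and thus generates an $[N,\kpq+\mu,2\delta+1]_q$ MDS code. For the last $\mu$ rows $\bQ$, whose $(\ell,j)$-entry is $\al_j^{\kpq+\ell-1}$, factoring $\al_j^{\kpq}$ out of the $j$-th column exhibits $\bQ$ as a Vandermonde matrix times the invertible diagonal matrix $\mathrm{diag}(\al_1^{\kpq},\ldots,\al_N^{\kpq})$, so $\bQ$ also generates an MDS code, namely an $[N,\mu]_q$-MDS code. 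This is exactly the structural hypothesis used inside the lemmas.

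Then I would simply invoke the two lemmas. By Lemma~\ref{lem:error_correction}, the matrix $\bL$ of Construction~A corresponds to a $\delta$-error-correcting $\mu$-randomized $(m,n,\X,f)$-IC over $\fq$; by Lemma~\ref{lem:strong_security}, it also corresponds to a $(\mu,t)$-strongly secure $\mu$-randomized $(m,n,\X,f)$-IC over $\fq$. By Definition~\ref{def:strong-secure}, these two properties together say precisely that $\bL$ yields a $(\mu,t,\delta)$-strongly secure linear $\mu$-randomized index code. Its length is the number of columns of $\bL$, which is $N = \kpq+\mu+2\delta$. Matching this against the lower bound from Theorem~\ref{thm:lowerbound2} shows that this length is optimal, proving both assertions (and, incidentally, that $\eta = \mu$ already suffices for an optimal code, consistent with Lemma~\ref{lem:randomness}).

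The substantive work has all been done upstream, so the remaining argument is essentially bookkeeping; the only point demanding a moment of care is the one flagged above, namely that the truncated block $\bQ$ generates an MDS code. This is what forces $\bQ[W]$ to be invertible for every $\mu$-subset $W \subseteq [N]$, which is used crucially in the proof of Lemma~\ref{lem:strong_security}, and it is exactly the role played by the field-size requirement $q \ge \kpq+\mu+2\delta+1$ in making the Vandermonde-type construction of $\bM$ (and hence of $\bQ$) go through.
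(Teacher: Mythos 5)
Your proof is correct and follows the same route as the paper: Theorem~\ref{thm:lowerbound2} supplies the lower bound, while Lemmas~\ref{lem:error_correction} and~\ref{lem:strong_security} show that Construction~A attains it, with Definition~\ref{def:strong-secure} gluing the two lemmas into the $(\mu,t,\delta)$-strong-security property. The only material you add is the explicit verification (via the Vandermonde/GRS structure and the diagonal factorization of $\bQ$) that the field-size hypothesis $q \geq \kpq+\mu+2\delta+1$ really does make $\bM$ and $\bQ$ MDS — a step the paper asserts without proof, so your write-up is if anything slightly more self-contained.
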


\section{Conclusions and Open Questions}
\label{sec:conclusion}
In this paper, we initiate a study of the security aspects of linear index coding schemes. 
We introduce a notion of block security and establish two bounds on the security level of a linear index code
based on the matrix $\bL$. These analysis makes use of the minimum distance 
and the dual distance of $\cl$, the code spanned by the columns of $\bL$. 
While the dimension of this code corresponds to the number of transmissions in the scheme, 
the minimum distance characterizes its security strength.
 
Our second contribution is the analysis of the strong security of linear index codes. 
New bounds on the length of linear index codes, which are resistant to errors, eavesdropping, and information leaking, 
are established. Index codes that achieve these bounds are constructed. These new bounds cannot be deduced 
directly from the existing results in network coding literature. 

One important problem, which remains open, deals with a design of an optimal secure index coding scheme. 
This problem can be formulated as follows: given an instance of the ICSI problem, 
how to design $\bL$, such that $\cl$ has the largest possible minimum distance? 
More specifically, let us define the binary \emph{side information matrix} $\bA=(\sa_{i,j})_{i \in [n], \; j \in [n]}$ as in~\cite{Yossef}, namely 
\[
\sa_{i,j} = \left\{ \begin{array}{cl} 
1 & \mbox{if } j = i \mbox{ or } j \in \X_i \\
0 & \mbox{otherwise} 
\end{array} \right. \; . 
\]
The problem is equivalent to finding a way to turn certain off-diagonal $1$'s in $\bA$ into $0$'s, 
such that the rows of the resulting matrix generate an error-correcting code of the largest possible minimum distance. 
It is very likely that this task is a hard problem. For comparison,
even finding the minimum distance of an error-correcting code given by its generating matrix 
is known to be NP-hard~\cite{Vardy1997}.

\section{Acknowledgements}

The authors would like to thank Fr\'{e}d\'{e}rique Oggier for helpful discussions. This work is 
supported by the National Research Foundation of Singapore (Research Grant
NRF-CRP2-2007-03).

\bibliographystyle{IEEEtran}
\bibliography{OnSecureICwithSI}

\end{document}